\documentclass[journal]{IEEEtran}

\usepackage[T1]{fontenc}
\usepackage[utf8]{inputenc}
\usepackage{amsmath,amsthm,mathtools,amssymb}
\mathtoolsset{showonlyrefs=true}
\usepackage{dsfont}

\usepackage{graphicx}
\usepackage{microtype}

\usepackage{url}

\usepackage[breaklinks]{hyperref}

\usepackage{orcidlink}

\newtheorem{corollary}{Corollary}

\newtheorem{theorem}{Theorem}

\newtheorem{conditions}{Conditions}
\newtheorem{remark}{Remark}

\usepackage{tikz}
\usetikzlibrary{calc,positioning,fit,decorations.pathreplacing}

\newcommand{\matr}[1]{{\boldsymbol{#1}}}
\renewcommand{\vec}[1]{{\boldsymbol{#1}}}
\newcommand{\PP}{\mathcal{P}}
\newcommand{\QQ}{\mathcal{Q}}
\newcommand{\eye}{\mathds{1}}

\begin{document}

\title{A graph theoretic view on small signal stability of inverter-based power grids}

\author{%
Iva Ba\v{c}i\'{c}\orcidlink{0000-0003-2987-5065},
Jakob Niehues\orcidlink{0000-0003-0140-6910},
Philipp C.~Böttcher\orcidlink{0000-0002-3240-0442},
Cai Dieball\orcidlink{0000-0002-0011-2358},
Leonardo Rydin Gorjão\orcidlink{0000-0001-5513-0580}, 
Andrea Benigni\orcidlink{0000-0002-2475-7003}, \textit{Senior Member, IEEE}, 
Frank Hellmann\orcidlink{0000-0001-5635-4949},
Dirk Witthaut\orcidlink{0000-0002-3623-5341}
\thanks{%
I.~Bačić, P.~C.~Böttcher, A.~Benigni and D.~Witthaut are with the Forschungszentrum J\"ulich, Institute of Climate and Energy Systems: Energy Systems Engineering (ICE-1), D-52428 J\"ulich, Germany.
I.~Ba\v{c}i\'{c} is with the Institute of Physics Belgrade, University of Belgrade, Serbia.
J.~Niehues and F.~Hellmann are with the Potsdam Institute for Climate Impact Research (PIK), D-14412 Potsdam, Germany.
J.~Niehues is with Technische Universität Berlin, D-10623 Berlin, Germany.
C.~Dieball and D.~Witthaut are with the Institute for Theoretical Physics, University of Cologne, D-50937 K\"oln, Germany. 
L.~Rydin Gorjão is with the Department of Environmental Sciences, Open University of the Netherlands, Heerlen, The Netherlands \& Institute of Physics, Norwegian University of Life Sciences, Ås, Norway.
A.~Benigni is with RWTH Aachen, Germany.
}}

\maketitle

\begin{abstract}
Dynamic grid stability is traditionally ensured with synchronous generators. Modern grids rely substantially more on inverter-based resources, which require grid-forming control to guarantee adequate system-wide synchronization and stability. Small-signal stability has granted various centralized and decentralized stability certificates --- but these have primarily been limited to sufficient criteria only.
In this work, we construct a necessary and sufficient small-signal stability criterion for lossless inverter-based power grids with arbitrary topology.
We show that asymptotic stability is equivalent to the positive definiteness of a single matrix that combines network topology, operating point, and effective droop gains. We derive graph-theoretic stability criteria based on an augmented cone graph and show that the contribution of graph cycles is typically small, as illustrated for three IEEE test cases. The resulting framework yields decentralized stability criteria, quantifies the conservatism introduced by decentralization, and may support the development of future grid codes.
\end{abstract}

\section{Introduction}
The ongoing energy transition poses difficult questions for the stable operation and design of power grids~\cite{witthautCollective2022}.
One central challenge is to ensure system-wide frequency synchronization and dynamic stability~\cite{ENTSOE2020}. Traditionally, this was the task of synchronous generators (SGs).
As they are progressively replaced by inverter-based resources (IBRs), grid-forming control becomes essential to maintain system stability.
Accordingly, grid codes are currently being revised to define the required grid-forming capabilities and to ensure that large populations of IBRs can collectively replace the stabilizing role of synchronous generators.

While SGs are connected to the closely monitored and actively managed transmission grid, IBRs are predominantly connected at the distribution level.
Operators have only limited observability and controllability of low- and medium-voltage distribution grids,  and IBRs' grid-forming controls will need to operate autonomously.
These challenges have led to a renewed interest in the small-signal stability of power grids, and especially in finding local stability criteria suitable for implementation in grid codes.

The field of small-signal stability of multi-machine power grids has a long tradition, going back to the seminal work of Bergen and Hill~\cite{bergenStructurePreserving1981}.
Since then, a plethora of results from power engineering~\cite{Machowski2020}, control theory~\cite{dorflerSynchronization2013, schiffer2014conditions}, and theoretical physics~\cite{witthautCollective2022} have provided a broad range of analytical tools for studying the small-signal stability of power systems.

In this paper, we derive a necessary and sufficient criterion for the small-signal stability of synchronization in lossless inverter-based power grids.
For the first time, we provide necessary criteria without assuming a restrictive device model.
We show that stability is equivalent to positive definiteness of a single matrix that combines network topology, operating point, and inverter dynamics.
This criterion admits a graph-theoretic interpretation, which clarifies how nodal properties and line loadings jointly determine system stability.
The graph-theoretic view naturally leads to decentralized sufficient criteria involving only local node and edge properties, generalizing previous results in the literature.
Moreover, it allows us to quantify the conservatism of such local criteria by identifying the stabilizing network interactions they neglect.

The remainder of this paper is organized as follows.
Section~\ref{sec:literature} reviews the existing literature on decentralized small-signal stability criteria for inverter-based power grids.
Section~\ref{sec:problem} introduces the system model and formulates the small-signal stability problem.
In Section~\ref{sec:stability-fundamental}, we derive the fundamental necessary and sufficient stability criterion and introduce a graph-theoretic perspective.
Section~\ref{sec:local} develops decentralized local stability certificates.
Finally, Section~\ref{sec:examples} illustrates the theoretical results by means of representative applications and examples.

\section{Literature Review}\label{sec:literature}
The literature on small-signal stability of inverter-based grids is dominated by \emph{sufficient} criteria.
Results that are simultaneously necessary and sufficient, and a clean account of \emph{when} a sufficient criterion is tight, are both rare and rely on strong modeling assumptions.

A large body of work develops increasingly \emph{decentralized} sufficient criteria---from the centralized port-Hamiltonian analysis of Schiffer et al.~\cite{schiffer2014conditions} through criteria based on $\mathcal{H}_\infty$~\cite{patesDecentralisedRobustInverterbased2017}, secant/Popov~\cite{monshizadehSecantPopovlikeConditions2019},
mixed gain-phase~\cite{huangGainPhaseDecentralized2024},
graphical separation of projected Davis--Wielandt shells~\cite{huangGeometricDecentralizedStability2025a,zhangPhantomDavisWielandtShell2026}, and
loop-shifted passivity~\cite{haberleDecentralizedParametricStability2026},
including the voltage-droop criterion of Niehues et al.~\cite{niehuesSmallSignal2026}.

These certificates rely on local device data and, at most, aggregated network scalars, deliberately discarding graph structure.
This is what makes them scalable, but is also the structural source of their conservativeness~\cite{baron-pradaImpactOperatingPoints2026}, which is known to increase with network stress~\cite{niehuesSmallSignal2026,wangDecentralizedStabilityCertificates}.

In contrast, necessary criteria for stability have required strong simplifying assumptions.
Classical closed-form necessary and sufficient criteria are only available for highly simplified models on restricted graphs: the synchronization criteria of Dörfler, Chertkov, and Bullo~\cite{dorflerSynchronization2013} and of Simpson-Porco, Dörfler, and Bullo~\cite{simpson-porcoSynchronizationPowerSharing2013} are exact for first-order, Kuramoto-type models without voltage dynamics on acyclic and other selected topologies.
For richer inverter models, Nishino, Onishi, and Ishizaki~\cite{nishinoSmallSignalStabilityCondition2025} obtain a necessary-and-sufficient condition in the network susceptance matrix and the stationary power flow that is independent of the device dynamics---a consequence of electromagnetic modeling of synchronous machines and grid-forming inverters as a singular perturbation without voltage-control dynamics.
The generalized short-circuit ratio of Xin et al.~\cite{xinGeneralizedShortCircuit2018} is exact by decomposing the system node by node, which rests on node homogeneity, while Gorbunov et al.~\cite{gorbunovIdentificationStabilityRegions2022} are exact under a uniform network $R/X$ ratio and a uniform droop ratio across inverters.
On the instability side, Gro{\ss}~\cite{grossCompensatingNetworkDynamics2022b} documents that the necessary-sufficient gap is, in general, open. 

Our analysis retains exactly the ingredients these exact results set aside---heterogeneous droop gains, $q$-$V$ voltage dynamics, and general topology.
We adopt a lossless network, consistent with the aforementioned work and with the inductive transmission regime, and regard the constant-$R/X$~\cite{niehuesSmallSignal2026} and lossy~\cite{balestraMultistabilityLossyPower2019} settings as the natural route to relaxing it.

Against this backdrop, the present paper supplies both rare elements at once.
We construct a necessary-and-sufficient small-signal stability criterion for heterogeneous droop inverters with $q$-$V$ voltage dynamics on arbitrary topology, together with a graph-theoretic, operating-point-explicit account of where sufficient criterion lose tightness---through decentralization, cycle-blindness and edge-wise decoupling.
Within this account, the existing decentralized sufficient criteria in~\cite{niehuesSmallSignal2026} reappear as more conservative special cases of our exact condition.
This also underpins why a separable Lyapunov certificate, the object underlying such decentralized criteria, is known to be necessary as well as sufficient only on trees, single cycles, and cacti~\cite{bermanMatrixDiagonalStability1983,arcakDiagonalStabilityClass2006,arcakDiagonalStabilityCactus2011}; on general meshed graphs it is merely sufficient, in line with the stabilizing role of cycles long documented in oscillator networks~\cite{witthautCollective2022,witthautBraesssParadoxOscillator2012,bronskiGraphHomologyStability2016,tylooRobustnessSynchronyComplex2018}.

\section{Problem Statement}
\label{sec:problem}

\subsection{Variables}

We consider the linear stability of power grid dynamics in power-phase coordinates around a fixed operating point.
The fundamental variables that characterize the power grid's state are the voltage amplitude $v_i$ and phase $\theta_i$ at bus $i$ and the active and reactive power $p_i$ and $q_i$ injected by node $i$ into the lines.
An operating point is characterized by uniformly oscillating voltages $\theta^\mathrm{op}_i(t) = \theta^\circ_i + \omega^\circ t$ with constant amplitudes $v^\circ$, active power $p_i^\circ$, and reactive power $q_i^\circ$.

It is often convenient to work with the frequency $\omega_i = \dot \theta_i$, the logarithm of the voltage amplitude $u_i = \ln(v_i)$ and its derivative, the relative amplitude velocity $\dot u = \frac{\mathrm{d}}{\mathrm{d}t}\ln(v) = \frac{\dot v}{v}$.
Linearizing around $v^\circ$ we have $\delta v \approx v^{\circ} \delta u$.
With this, $u + j \theta$ is the complex phase~\cite{niehuesComplexPhase2025} and its derivative $j \omega + \dot u$ is the complex frequency~\cite{milanoComplexFrequency2021}.

Power-phase coordinates have several advantages: For many common IBR control schemes, the linearization in power-phase coordinates is only weakly dependent on the set points and grid state around which we linearize~\cite{koglerNormalForm2022, buttnerComplexPhase2025}.
Further, the linearization does not depend on arbitrary phases and stays valid during phase drifts \cite{niehuesComplexPhase2025}.
Finally, the linear response of the network in power-phase coordinates captures the effect of line loading on linear stability explicitly, and the linear response of the nodes, which relates active power and reactive power to frequency and voltage amplitude, closely aligns with the performance requirements in power grids~\cite{haberle2026next}.

\subsection{Network model}
Power flows are described by the lossless load flow equations,
\begin{align}
    p_i &= \sum_{j=1}^n v_i v_j  B_{ij} \sin(\theta_i - \theta_j) 
    + p_i^{\rm load},\\
    q_i &= - \sum_{j=1}^n v_i v_j B_{ij} \cos(\theta_i - \theta_j) 
    + q_i^{\rm load},
\end{align}
where we assume that $B_{ij} = B_{ji} \ge 0 $ for $i\neq j$ and $B_{ii} = B_{ii}^{\rm shunt} - \sum_{j \neq i } B_{ij}$.
Throughout this manuscript, we assume that shunt elements are such that $B_{ii} < 0$ for all $i$.

Here we assume that every node $i$ is associated with an active device that injects or extracts active and reactive power.
Furthermore, we allow for static loads connected to the same nodes with fixed power consumption $p_i^{\rm load}$ and $q_i^{\rm load}$. 
We can further allow for nodes with instantaneous loads and no active devices, as in the classical model of power system dynamics.
These nodes can be eliminated via a Kron reduction~\cite{dorflerKron2013} leading to an effective network with effective parameters $B_{ij}$.

At a given operating point, the linear response of the nodal power injection to small changes in phase $\delta \theta_i = \theta_i - \theta^\circ_i$ and amplitude $\delta u_i = u_i - u_i^\circ$ is given by
\begin{equation}
    \begin{pmatrix}
        \vec p - \vec p^\circ \\ \vec q - \vec q^\circ
    \end{pmatrix} = 
    \underbrace{\begin{pmatrix}
        \matr{M}^{p \theta} & \matr{M}^{p v} \\
        \matr{M}^{q \theta} & \matr{M}^{q v} 
    \end{pmatrix}}_{\eqqcolon \matr M}
    \begin{pmatrix}
        \vec \delta \vec \theta \\ \vec \delta \vec u
    \end{pmatrix}.
\end{equation}
The respective matrices are given by
\begin{align}
    {M}^{p \theta}_{ij} &= 
    \begin{cases}
        - B_{ij} v_i^\circ v_j^\circ \cos(\theta_i^\circ - \theta_j^\circ)
        & \mbox{for } i \neq j \\
        \sum_{k \neq i}
         B_{ik} v_i^\circ v_k^\circ \cos(\theta_i^\circ - \theta_k^\circ)
         & \mbox{for } i = j, \\
    \end{cases} \\
    {M}^{p v}_{ij} &= 
    \begin{cases}
         B_{ij} v_i^\circ v_j^\circ \sin(\theta_i^\circ - \theta_j^\circ)
         & \mbox{for } i \neq j \\
        \sum_{k \neq i} B_{ik} v_i^\circ v_k^\circ \sin(\theta_i^\circ - \theta_k^\circ)
        & \mbox{for } i = j, \\
    \end{cases} \\
    {M}^{q \theta}_{ij} &= M^{p v}_{ji}, \\
    {M}^{q v}_{ij} &= 
    \begin{cases}
        - B_{ij} v_i^\circ v_j^\circ \cos(\theta_i^\circ - \theta_j^\circ)
        &  i \neq j \\
        - 2 B_{ii} (v_i^\circ)^2 - \sum_{k \neq i} B_{ik} v_i^\circ v_k^\circ \cos(\theta_i^\circ - \theta_k^\circ)
         &  i = j. \\
    \end{cases}    
\end{align}
As noted above, we will also work in terms of frequency and amplitude velocity.
The linear response in the Laplace domain is then simply
\begin{equation}
    \begin{pmatrix}
        \vec p - \vec p^\circ \\ \vec q - \vec q^\circ
    \end{pmatrix} =  \underbrace{\frac{1}{s} \matr M}_{\eqqcolon \matr N(s)} \!\!
    \begin{pmatrix}
        \vec \delta \vec \omega \\ \vec \delta \dot{\vec u}
    \end{pmatrix}.
\end{equation}
In power-phase coordinates, the network response is not passive.
Eigenvalues with a negative real part typically occur in the voltage sector.
As we will discuss below, shifting part of the nodal dynamics to the network side can make the network passive~\cite{haberleDecentralizedParametricStability2026,niehuesSmallSignal2026}.

\subsection{Device model}
As a counterpart to the network equations, the devices encode how local frequency and amplitude velocities $(\omega_i,\dot u_i)$ react to power and reactive power at the node $(p_i,q_i)$.
We assume that device dynamics can be described by a differential equation on internal states $\vec x_i$, $\theta_i$ and $u_i$ with inputs $p_i$ and $q_i$.
Due to symmetry there is no explicit dependence on $\theta_i$ \cite{koglerNormalForm2022, buttnerComplexPhase2025}, and its equations are of the form
\begin{align}
    \dot{\vec x}_i &= f_i(\vec x_i, u_i, p_i, q_i), 
    \quad \mbox{and} \quad
    \begin{pmatrix}
        \dot{\theta}_i \\ \dot u_i
    \end{pmatrix} &= g_i(\vec x_i, u_i, p_i, q_i).
\end{align}
For our analysis, we make the structural assumption introduced in~\cite{niehuesSmallSignal2026}: the nodes react proportionally to reactive power and to voltage disturbances; in other words, they implement a type of $q$-$V$ droop with effective droop constant $k^q_i$.
The equations then take the explicit form
\begin{align}
    \dot{\vec x}_i &= \tilde f_i(\vec x_i, p_i, k_i^q q_i + u_i), \\   
    \begin{pmatrix}\dot{\theta}_i \\ \dot u_i\end{pmatrix} &= \tilde g_i(\vec x_i, p_i, k_i^q q_i + u_i).
\end{align}
Under this assumption, \cite{niehuesSmallSignal2026} proved sufficient, decentralized stability criteria.
As we will see below, this assumption allows shifting the negative voltage feedback to the network side.
This has a dual effect: It makes typical device models as well as the network response passive.

The linear response of frequency and voltage velocity to active and reactive power disturbances can then be separated as
\begin{align}
    \begin{pmatrix}
        \delta \omega_i \\ \delta \dot u_i
    \end{pmatrix} &= \matr D_i(s)
    \begin{pmatrix}
        p_i^\circ - p_i \\
        q_i^\circ - q_i 
    \end{pmatrix} \\
    &= \matr{ \tilde D}_i(s)
    \begin{pmatrix}
        p_i^\circ - p_i  \\
        q_i^\circ - q_i 
    \end{pmatrix} + \frac 1 {k^q_i} \matr{ \tilde D}_i(s)
    \begin{pmatrix}
        0 \\
        u^\circ_i - u_i
    \end{pmatrix}, \label{eq:general-D}
\end{align}
and we can treat $\delta \tilde q_i = \delta q_i + \frac1{k^q_i} \delta u_i$ as our nodal input.
The response $\matr{\tilde D}_i$ can typically be read off directly from the equations.
To obtain the explicit relationship to $\matr D_i$, define the $2{\times}2$ matrix $\matr{\tilde K}_i(s) = \operatorname{diag}( 0 , \frac 1 { k^q_i s} )$.
A direct calculation shows that
\begin{align}
    \matr D_i(s) &= \left(\matr \eye - \matr{\tilde D}_i(s) \matr{\tilde K}_i(s)\right)^{-1} \matr{\tilde D}_i(s), \\
    \matr {\tilde D}_i(s) &=  \matr{D}_i(s) \left(\matr \eye + \matr{\tilde K}_i(s) \matr{D}_i(s) \right)^{-1} . \label{eq:D-tilde}
\end{align}

\begin{figure}[tb]
\centering
\begin{tikzpicture}[
      >=latex,
      block/.style={draw,rectangle,align=center,minimum height=8mm,minimum width=22mm},
      bigblock/.style={draw,rectangle,align=center,minimum height=12mm,minimum width=36mm},
      sum/.style={draw,circle,inner sep=0pt,minimum size=3mm},
      node distance=6mm
    ]
    \coordinate (in) at (0,0);
    \node[sum, right=10mm of in] (s1) {};
    \node[block, right=14mm of s1] (D) {\(\vec{D}(s)\)};
    \coordinate (tap) at ($(D.east)+(8mm,0)$);
    \coordinate (out) at ($(tap)+(10mm,0)$);
    \draw[->] (in) -- (s1);
    \draw[->] (s1) -- (D);
    \draw[->] (D.east) -- (out)
      node[pos=0.9,above] {\(\begin{pmatrix}\vec{\omega}\\\dot{\vec{u}}\end{pmatrix}\)};
    \node[font=\scriptsize] at ($(s1.west)+(-1.8mm,-1.8mm)$) {$+$};
    \node[font=\scriptsize] at ($(s1.south)+(2.2mm,-1.6mm)$) {$-$};
    \node[sum, below=10mm of s1] (su) {};
    \node[block, right=14mm of su] (Ku) {%
      \(\dfrac{1}{s}
      \begin{pmatrix}
        \matr 0 & \matr 0 \\
        \matr 0 & \matr K_q^{-1}
      \end{pmatrix}\)
    };
    \coordinate (vrU) at ($(tap)+(0,-14mm)$);
    \draw[->] (tap) |- (Ku.east);
    \draw[->] (Ku.west) -- (su.east);
    \node[font=\scriptsize] at ($(su.east)+(1.8mm,1.8mm)$) {$-$};
    \draw[->] (su.north) -- (s1.south);
    \node[sum, below=14mm of su] (sl) {};
    \node[block, right=14mm of sl] (Kl) {%
      \(\dfrac{1}{s}
      \begin{pmatrix}
        \matr 0 & \matr 0 \\
        \matr 0 & \matr K_q^{-1}
      \end{pmatrix}\)
    };
    \coordinate (vrL) at ($(tap)+(0,-28mm)$);
    \draw[->] (tap) |- (Kl.east);
    \draw[->] (Kl.west) -- (sl.east);
    \node[font=\scriptsize] at ($(sl.east)+(1.8mm,1.8mm)$) {$+$};
    \node[font=\scriptsize] at ($(sl.south)+(2.2mm,-1.6mm)$) {$+$};
    \draw[->] (sl.north) -- (su.south);
    \node[font=\scriptsize] at ($(su.south)+(2.2mm,-1.6mm)$) {$+$};
    \node[block, below=3mm of Kl] (N) {\(\vec{N}(s)\)};
    \coordinate (vrN) at ($(tap)+(0,-42mm)$);
    \draw[->] (tap) |- (N.east);
    \draw[->] (N.west) -| (sl.south);
    \draw (tap) -- (vrN);
    \node[draw,dashed,rounded corners,inner sep=3mm,fit=(s1)(D)(su)(Ku)(vrU)] (boxD) {};
    \node[right=0mm of boxD.east] {\(\vec{\tilde{D}}(s)\)};
    \node[draw,dashed,rounded corners,inner sep=3mm,fit=(sl)(Kl)(N)(vrN)] (boxN) {};
    \node[right=0mm of boxN.east] {\(\vec{\tilde{N}}(s)\)};
\end{tikzpicture}
\vspace*{-0.2cm}\caption{System model with coordinate transformation and loop shift.
}
\label{fig:system-full}
\end{figure}

We now return to the full system dynamics.
As noted above, we will shift the voltage feedback to the network side; that is, we work with the shifted reactive power $\delta \vec{\tilde q}$ as input.
For the nodal dynamics, \eqref{eq:general-D} shows that this means working directly with $\matr{\tilde D}_i(s)$.
To give the network side of this shift, introduce the diagonal matrix $\matr K^q \coloneqq \operatorname{diag}(k^q_i)$ which collects the effective droop constants.
The network response in shifted $\delta \vec{\tilde q}$ coordinates is then given by
\begin{equation}
    \matr {\tilde N}(s) = \frac 1 s \matr \Xi \coloneqq \frac 1 s \left(\matr M + 
    \begin{pmatrix} 
        \matr 0  & \matr 0 \\
        \matr 0 & \matr{K}_q^{-1} 
    \end{pmatrix}\right). \label{eq:Xi}
\end{equation}
Figure~\ref{fig:system-full} depicts the  full interconnected system and the shifts.

\subsection{Special case: Diagonal node responses and droop laws}
For diagonal node response, the separation above is always available.
The most important example in this model class is the droop-control law, given as
\begin{align}
    \dot{\theta}_i&=\omega_i,\\
    \tau_i^p \dot\omega_i &= -\omega_i + \omega^{\circ} + \beta_i^p  (-p_i + p_i^\circ), \\
    \tau_i^q \dot v_i &= - v_i + v_i^\circ + \beta_i^q \left( -q_i + q_i^{\circ}  \right).
\end{align}
Linearization does not change the first two equations; the last equation turns into
\begin{equation}
    \tau_i^q \dot u_i = v_i^{\circ} (- u_i + u_i^\circ) + \beta_i^q \left( -q_i + q_i^{\circ}  \right).
\end{equation}
From this, we can read off $k_i^q = \beta_i^q/v_i^\circ$ and 
\begin{equation}
    \matr{\tilde D}_{i,\mathrm{droop}}(s) = \begin{pmatrix}
        \frac{\beta_i^p}{1 + s \tau_i^p} & 0 \\ 0 & \frac{k_i^q}{\tau_i^q} 
    \end{pmatrix}.\label{eq:D-droop}
\end{equation}
While droop controllers are not passive in voltage-current coordinates, $\matr{\tilde D}_{i,\mathrm{droop}}(s)$ is.
Further, we see that the linear response of the droop law does not depend on $p_i^\circ$, $q_i^\circ$ and $\theta_i^\circ$; it does depend on $v_i^\circ$, which can only vary within tight bounds in practice.

\section{Fundamental Stability Results}\label{sec:stability-fundamental}
In this section, we derive fundamental results on the stability of inverter-based power grids in terms of the matrix $\matr \Xi$ in~\eqref{eq:Xi}.
We assume that, for each device $i$, there exists a real number $k_i^q$ such that the modified transfer function $\matr{\tilde D}_i(s)$ in~\eqref{eq:D-tilde} satisfies the following conditions.
By construction, the aggregated transfer matrix $\matr{\tilde D}(s)$
then satisfies the same conditions.

\begin{conditions}[Inverter Transfer Functions]\label{con:passivity}
The modified inverter transfer function $\matr{\tilde D}(s)$ satisfies the following assumptions in the closed right half plane
$\overline{\mathbb C_+}\coloneqq\{s\in\mathbb C:\Re(s)\ge0\}$:
\begin{enumerate}
\item The inverse
\begin{equation}
    \matr \Gamma(s)\coloneqq\matr{\tilde D}(s)^{-1}
\end{equation}
exists, is real-rational and has no poles.
\item The hermitian part is positive definite, i.e.
\begin{equation}
    \matr \Gamma(s)+\matr \Gamma(s)^H \succ 0.
\end{equation}
\item There exist constants $\gamma>0$ and $S>0$ such that
\begin{equation}
    \matr \Gamma(s)+\matr \Gamma(s)^H \succeq \gamma \matr \eye
\end{equation}
for all $s$ with $|s|\ge S$.
\end{enumerate}
\end{conditions}

Assumption (iii) is a technical growth condition used in the proof of Theorem~\ref{thm:stability}. It prevents the inverse transfer function from becoming arbitrarily small as $|s|\rightarrow\infty$, which can be interpreted as requiring a minimum amount of dissipation at high frequencies. These assumptions are automatically satisfied for droop-controlled inverters and remain valid under sufficiently small perturbations, provided the separation property continues to hold.

\subsection{Stability of the closed loop}
The closed-loop transfer function of the system depicted in Fig.~\ref{fig:system-full} reads
\begin{equation}
    \matr G(s) = \matr{\tilde D}(s) \left( \matr \eye + \matr{\tilde D}(s) \matr{\tilde N}(s) \right)^{-1} . \label{eq:closed-loop1}
\end{equation}
By the assumptions, $\matr{\tilde D}(s)$ is invertible, such that we can rewrite this transfer function as
\begin{equation}
    \matr G(s) = s \matr{\tilde D}(s) \big( \underbrace{s \matr{\Gamma}(s) + \matr{\Xi}}_{\eqqcolon \matr \Phi(s)}\big)^{-1} \matr{\Gamma}(s), \label{eq:closed-loop2}
\end{equation}
Hence, stability is essentially determined by $\matr \Phi(s)$: If this matrix is nonsingular for all $s \in \overline{\mathbb{C}_+}$, then $\matr G(s)$ has no poles in the right half plane and the system is stable.
If it is singular for one $s \in \overline{\mathbb{C}_+}$, then the system is not stable.
We will show that this can be decided from the matrix $\matr \Xi$ alone.

We note that we have to exclude the trivial mode where all phases $\theta_i$ are shifted by the same amount.
This mode does not affect power flows and is thus irrelevant for the stability of the physical system.
The matrix $\matr \Xi$ always has a corresponding eigenvector with eigenvalue zero, which is excluded from the stability analysis.
To this end, we define the perpendicular subspace
\begin{equation}
   \mathcal D_\perp 
   \coloneqq \left\{ \begin{pmatrix}\vec y \\\vec z\end{pmatrix}\in \mathbb{R}^{2n} \;\middle|\; \mathbf 1^\top \vec y = 0 \right\},
\end{equation}
where $\vec 1\in\mathbb{R}^n$ is the vector of all ones.
We say that a matrix $\matr \Xi$ is positive definite on $\mathcal D_\perp$ if and only if $\vec x^H \matr \Xi \vec x > 0 $ for all $\vec x \in \mathcal D_\perp$, $\vec x \neq \vec 0$, and that a matrix function $\matr \Phi(s)$ is nonsingular on $\mathcal D_\perp$ if
\begin{equation}
    \ker(\matr \Phi(s)) \cap \mathcal D_\perp = \{ \vec 0 \}.
\end{equation}
We furthermore choose an orthonormal basis of $\mathcal{D}_\perp$ and summarize the basis vectors in the matrix $\matr O_\perp \in \mathbb{R}^{(2n)\times(2n-1)}$ such that $\matr O_\perp^\top \matr O_\perp = \matr \eye$.
Then we define the pseudo determinant 
\begin{equation}
    \operatorname{pdet}( \matr \Phi) = \det (\matr O_\perp^\top \matr \Phi \matr O_\perp).
\end{equation}
Now $\ker(\matr \Phi(s)) \cap \mathcal D_\perp \neq \{0\}$ if and only if $\operatorname{pdet}( \matr \Phi) = 0$.

\begin{theorem}
\label{thm:stability}
The matrix $\matr \Phi(s)$ is nonsingular on $\mathcal D_\perp$ for all $s\in\overline{\mathbb C_+}$ if and only if $\matr \Xi$ is positive definite on $\mathcal D_\perp$.
Equivalently, the linearized closed-loop system is asymptotically stable on $\mathcal D_\perp$ if and only if $\matr \Xi$ is positive definite on $\mathcal D_\perp$.
\end{theorem}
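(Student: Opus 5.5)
We prove the two directions separately, working throughout with the restriction to $\mathcal D_\perp$. Write $\matr\Phi_\perp(s)=\matr O_\perp^\top\matr\Phi(s)\matr O_\perp = s\,\matr\Gamma_\perp(s)+\matr\Xi_\perp$, with $\matr\Gamma_\perp(s)=\matr O_\perp^\top\matr\Gamma(s)\matr O_\perp$ and $\matr\Xi_\perp=\matr O_\perp^\top\matr\Xi\matr O_\perp$. We first note that $\matr\Xi$ is real symmetric. Indeed, $\matr M^{q\theta}=(\matr M^{pv})^\top$, and the diagonal blocks $\matr M^{p\theta}$, $\matr M^{qv}$ are symmetric. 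Moreover $\matr M^{p\theta}\vec 1=\vec 0$ and $\matr M^{q\theta}\vec 1=\vec 0$, since the row sums vanish; so $(\vec 1,\vec 0)^\top$ spans a kernel direction of $\matr\Xi$ that is orthogonal to $\mathcal D_\perp$. Hence $\mathcal D_\perp$ is $\matr\Xi$-invariant, and $\matr\Xi_\perp$ is the symmetric compression whose eigenvalues are the remaining $2n-1$ eigenvalues of $\matr\Xi$.

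\emph{Sufficiency.} Assume $\matr\Xi_\perp\succ0$. Suppose $\matr\Phi(s)\vec x=\vec 0$ for some $\vec x\in\mathcal D_\perp\setminus\{0\}$ and some $s\in\overline{\mathbb C_+}$. Pairing with $\vec x^H$ gives $s\,\vec x^H\matr\Gamma\vec x+\vec x^H\matr\Xi\vec x=0$.

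If $s=0$, this reads $\vec x^H\matr\Xi\vec x=0$, which contradicts positive definiteness.

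If $s\neq0$, divide by $s$ and take real parts. Since $\vec x^H\matr\Xi\vec x>0$ is real, $\Re(1/s)=\Re(s)/|s|^2\ge0$, and Condition~\ref{con:passivity}(ii) gives $\Re(\vec x^H\matr\Gamma\vec x)>0$, we obtain
\begin{equation}
0=\Re\big(\vec x^H\matr\Gamma(s)\vec x\big)+\frac{\Re(s)}{|s|^2}\,\vec x^H\matr\Xi\vec x>0,
\end{equation}
a contradiction.

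A subtle point is that we need $\ker\matr\Phi(s)\cap\mathcal D_\perp=\{0\}$ rather than injectivity of the compression. The two differ only if $\matr\Gamma$ does not respect the splitting. Since pdet is used as the criterion, we will argue directly with $\matr\Phi_\perp$: the same quadratic-form argument applied to $\vec y^H\matr\Phi_\perp\vec y$ with $\vec x=\matr O_\perp\vec y$ shows $\matr\Phi_\perp(s)$ is nonsingular.

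Nonsingularity on the closed right half plane then gives absence of closed-loop poles there via \eqref{eq:closed-loop2}, using that $\matr\Gamma$ has no poles. The factor $s$ cancels the removed trivial mode at $s=0$.

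\emph{Necessity.} This is the main obstacle. Assume $\matr\Xi_\perp$ is not positive definite, so it has an eigenvalue $\lambda\le0$. If $\lambda=0$, then $\matr\Phi_\perp(0)=\matr\Xi_\perp$ is singular and we are done. If $\lambda<0$, we use a homotopy/continuity argument on the real axis, $f(s)=\det\matr\Phi_\perp(s)$ for real $s\ge0$. Here $f$ is real and continuous, because $\matr\Gamma$ is real-rational without poles in $\overline{\mathbb C_+}$.

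At $s=0$, the sign of $f$ is $(-1)^{m}$, where $m$ is the number of negative eigenvalues of $\matr\Xi_\perp$.

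For large real $s$, the symmetric part $\matr\Gamma_\perp+\matr\Gamma_\perp^\top\succeq\gamma\eye$ by Condition~\ref{con:passivity}(iii), so it dominates. Indeed, $\matr\Phi_\perp(s)/s=\matr\Gamma_\perp(s)+\matr\Xi_\perp/s$ has positive definite symmetric part for $s$ large. A real matrix with positive definite symmetric part has positive determinant, since its real eigenvalues are positive and its complex eigenvalues come in conjugate pairs. Hence $f(s)>0$ for large $s$.

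If $m$ is odd, the intermediate value theorem yields a real root $s^*>0$, i.e., an unstable pole.

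For even $m$, the real-axis argument fails. There we use a deformation instead. Consider $\matr\Phi_\perp^{(t)}(s)=s\,\matr\Gamma_\perp(s)+\matr\Xi_\perp+t\eye$ for $t\ge0$. For $t$ large, $\matr\Xi_\perp+t\eye\succ0$, so by the sufficiency part there are no zeros in $\overline{\mathbb C_+}$. As $t$ decreases to $0$, zeros can enter $\overline{\mathbb C_+}$ only through the imaginary axis or from infinity. Condition (iii) excludes entry from infinity, and our quadratic-form argument shows that on $s=j\omega$, $\omega\neq0$, no zero exists for any $t$. Indeed, the real part of $\vec x^H\matr\Phi^{(t)}\vec x/(j\omega)$ equals $\Re\,\vec x^H\matr\Gamma\vec x>0$.

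So zeros cross only through $s=0$, and this happens exactly when $-t$ passes an eigenvalue of $\matr\Xi_\perp$. Near $s=0$ a simple eigenvalue $\mu(t)=\lambda+t$ crossing zero produces a zero $s(t)\approx-\mu(t)/(\vec y^\top\matr\Gamma_\perp(0)\vec y)$, with $\vec y$ the eigenvector. Since $\Re(\vec y^\top\matr\Gamma_\perp(0)\vec y)>0$, this zero moves into $\mathbb C_+$ as $\mu$ turns negative. Multiple eigenvalues are handled by perturbing or by a degree count with the argument principle.

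Hence at $t=0$ the number of zeros in $\mathbb C_+$ equals $m\ge1$, which proves necessity. Equivalence with asymptotic stability then follows from \eqref{eq:closed-loop2}.
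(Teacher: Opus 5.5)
Your argument is correct in substance, but for necessity it takes a different route from the paper. The paper uses one homotopy on the device side, $\matr\Gamma_a=(1-a)\matr\Gamma+a\eye$, and keeps $\matr\Xi$ fixed. Because $\matr\Xi_\perp$ stays nonsingular along this path, the whole imaginary axis \emph{including} $s=0$ is a barrier. The number of zeros in $\mathbb C_+$ is then read off at $a=1$ from $\det(s\eye+\matr\Xi_\perp)$ as the number $k$ of negative eigenvalues. This settles both directions at once, for arbitrary multiplicities, with no local analysis.

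You instead prove sufficiency directly by the passivity/quadratic-form argument. This is more elementary, and it is exactly the direction the later certificates rely on. For necessity you keep $\matr\Gamma$ fixed and deform the network side, $\matr\Xi_\perp+t\eye$. The origin is then no longer a barrier, so you must analyse each eigenvalue crossing there. In exchange you get an explicit mechanism: along your path, each negative eigenvalue of $\matr\Xi_\perp$ creates an unstable zero born on the real axis at $s=0$, with entry direction fixed by positivity of $\matr\Gamma_\perp(0)$ alone. Since the $t$-homotopy already yields the exact count $m$, your intermediate-value argument for odd $m$ is redundant, though it does show that a real positive pole exists in that case.

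\textbf{The multiple-eigenvalue crossing.} This is the step you only sketch; it closes as follows. Let $\lambda$ have multiplicity $r$, let $\matr Y$ be an orthonormal eigenbasis, and set $\tau=t+\lambda$. A Schur-complement reduction shows that the $r$ zeros near the origin satisfy $s\approx-\tau/\nu_i$, where $\nu_i$ are the eigenvalues of $\matr Y^\top\matr\Gamma_\perp(0)\matr Y$. This real matrix has positive definite symmetric part, so $\Re\nu_i>0$. Hence all $r$ zeros enter $\mathbb C_+$ together as $\tau$ turns negative, which you can make rigorous with Rouch\'e in the rescaled variable $s/|\tau|$. Alternatively, the paper's $\matr\Gamma$-homotopy avoids the issue entirely.

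\textbf{The subtlety you raise.} It is genuine and affects the paper equally. When $\matr\Gamma(s)$ does not leave $\mathcal D_\perp$ invariant (e.g.\ heterogeneous droop gains), the zeros of $\operatorname{pdet}\matr\Phi(s)$ are not the non-trivial closed-loop poles. Those poles are the zeros of $g(s)=\det\matr\Phi(s)/s$. Both arguments transfer to $g$:
\begin{itemize}
\item The imaginary-axis barrier holds for every $\vec x\in\mathbb C^{2n}$, because $\matr\Xi$ is real symmetric.
\item With $\vec e=(\vec 1,\vec 0)/\sqrt n$, one has $g(0)=\det(\matr\Xi_\perp)\,\vec e^\top\matr\Gamma(0)\vec e$, which is nonzero whenever $\matr\Xi_\perp$ is nonsingular.
\item For your homotopy, shift by $t(\eye-\vec e\vec e^\top)$ instead of $t\eye$, so that $\vec e$ stays in the kernel.
\end{itemize}
So the final appeal to \eqref{eq:closed-loop2} should be made through $g$ rather than through the compression $\matr\Phi_\perp$.
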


\begin{remark}
If $\matr \Xi$ is positive semi-definite but singular on $\mathcal D_\perp$, then the closed-loop transfer function has a non-trivial pole at the origin.
Hence, the linearized system is not asymptotically stable (but may be marginally stable).
This, however, does not decide the local stability of the original nonlinear system, which is then determined by higher-order terms.
\end{remark}

\begin{proof}
We first distinguish two cases.
If $\matr \Xi$ is singular on $\mathcal{D}_\perp$, then $\matr \Phi(s=0) = \matr \Xi$ is singular, so the closed-loop transfer function has a nontrivial pole at the origin.
Consequently, the linearized closed-loop system is not asymptotically stable (but may be marginally stable).
Therefore, the claim follows immediately.
In the remainder of the proof, we assume that $\matr \Xi$ is nonsingular on $\mathcal{D}_\perp$.

We now prove that the number $k$ of negative eigenvalues of the matrix $\matr \Xi$ equals the number of points in $s \in \mathbb{C}_+$ where $\matr \Phi(s)$ becomes singular, counted with multiplicity. We prove this statement via homotopy, defining the family of matrices
\begin{equation}
    \matr \Gamma_a(s) = (1-a) \matr \Gamma(s) + a  \matr \eye,
\end{equation}
with $a \in [0,1]$ preserving positivity.
Furthermore we define
\begin{equation}
    \matr \Phi_a(s) = s \matr \Gamma_a(s) + \matr \Xi \label{eq:Phi_a}
\end{equation}
and $f_a(s)= \operatorname{pdet}(\matr \Phi_a(s))$.
Since $\matr{\Gamma}(s)$ is real rational and has no poles in $\overline{\mathbb{C}_+}$, the functions $f_a(s)$ are analytic in $\mathbb{C}_+$ and continuous on $\overline{\mathbb{C}_+}$.

First consider the case $a = 1$ for which $ \matr \Phi_1(s) = s \matr \eye + \matr \Xi$ and  $f_1(s) = \operatorname{pdet}(s \matr \eye + \matr \Xi)$.
We immediately see that $f_1(s)$ has exactly $k$ zeros in $\mathbb{C}_+$ corresponding to the negative eigenvalues of $\matr \Xi$.

Second, we show that no zero can cross the imaginary axis when we vary $a$ by contradiction.
So assume that $f_a(j \omega) = 0$.
Then there is a normalized non-zero vector $\vec x \in \mathcal{D}_\perp$ such that
\begin{align}
   0 &= \vec x^H \left( (1-a) j  \omega \, \matr \Gamma(j \omega) + a  j \omega \matr \eye + \matr \Xi \right) \vec x  \\
   &= (1-a) j \omega  \vec x^H \matr \Gamma(j \omega) \vec x + a j \omega + \vec x^H \matr \Xi \vec x.
\end{align}
The imaginary part reads,
\begin{equation}
    0 = \omega \left(a + (1-a) \Re(\vec x^H \matr \Gamma(j \omega) \vec x) \right). \label{eq:proof-impart}
\end{equation}
By assumption, $\matr \Gamma(s)$ is strictly positive real such that $\Re(\vec x^H \matr \Gamma(j \omega) \vec x)> 0$.
Hence,~\eqref{eq:proof-impart} implies $\omega = 0$.
Substituting this into the original vector equation yields $\matr \Xi \vec x=0$.
Since $\vec x \neq 0$ and $\vec x \in \mathcal D_\perp$, the relation $\matr \Xi \vec x = 0$ contradicts the assumed nonsingularity of $\matr \Xi$ on $\mathcal D_\perp$.
We conclude that no zero can cross the imaginary axis.

Third, we show that no zero can escape to infinity as we vary $a$.
By conditions~\ref{con:passivity}, there exist constants $\gamma > 0$ and $S > 0$ such that
$\matr \Gamma(s) + \matr \Gamma(s)^H \succeq \gamma \matr \eye $ for all $s\in\overline{\mathbb C_+}$ with $|s|\ge S$.
Define $\gamma_0 \coloneqq \min\{\gamma/2,1\}$ and $S_0 = \max\{S, \| \matr \Xi\| \gamma_0^{-1}\}$, which are independent of $a$.
Then we have for every $|s| \ge S_0$ and every vector $\vec x$ 
\begin{equation}
    \|\matr \Gamma_a(s)\vec x\|\,\|\vec x\| \ge \left|\vec x^H\matr \Gamma_a(s)\vec x\right| \ge \Re\!\left(\vec x^H\matr \Gamma_a(s)\vec x\right) \ge \gamma_0 \|\vec x\|^2 .
\end{equation}
Hence, $\matr \Gamma_a(s)$ is uniformly invertible and $\| \matr \Gamma_a(s)^{-1}\| \le \gamma_0^{-1}$.
Now, we write \eqref{eq:Phi_a} as
\begin{equation}
    \matr \Phi_a(s) = s\matr \Gamma_a(s)+\matr\Xi = s\matr \Gamma_a(s) \left(\matr \eye + \frac{1}{s}\matr \Gamma_a(s)^{-1}\matr\Xi \right).
\end{equation}
For $|s| \ge S_0$ we have $\left\|\frac{1}{s}\matr \Gamma_a(s)^{-1}\matr\Xi\right\|<1$, such that the second factor is invertible by the Neumann series, and hence $\matr \Phi_a(s)$ is also invertible.
Consequently, there exists a radius $S_0>0$ independent of $a\in[0,1]$, such that $f_a(s) = \operatorname{pdet} (\matr \Phi_a(s)) \neq 0$ for all $s\in\overline{\mathbb C_+}$ with $|s|\ge S_0$.
Therefore, no zero of $f_a$ can escape to infinity.

Finally, we can prove the statement by homotopy.
For $a=1$, $f_{a=1}(s)$ has $k$ zeros in $\mathbb{C}_+$.
As we decrease $a$ to zero, these zeros cannot cross the imaginary axis or escape to infinity.
Hence the number of zeros of $f_a$ in $\mathbb{C}_+$ is invariant under the homotopy, and therefore $f_{a=0}(s)$ has the same number $k$ of zeros as $f_{a=1}(s)$.

The result follows: If $\matr \Xi$ is singular on $\mathcal D_\perp$, then $\matr \Phi(0)=\matr \Xi$ is singular and the closed loop is not asymptotically stable.
Otherwise, homotopy shows that $\matr \Phi(s)$ is nonsingular on $\overline{\mathbb C_+}$ if and only if $k = 0$ and $\matr \Xi$ is nonsingular on $\mathcal D_\perp$, which is equivalent to $\matr \Xi$ being positive definite on $\mathcal D_\perp$.
\end{proof}

We conclude that the matrix $\matr \Xi$ essentially determines the stability of the inverter-based power grid.
In the following, we will introduce a graph theoretic perspective on the matrix $\matr \Xi$ that allows us to reformulate the stability theorem and to derive simpler sufficient stability criteria.
To this end, we have to introduce some tools from algebraic graph theory.

\subsection{Tools from Algebraic Graph Theory}
Let $\mathcal{E}$ denote the set of edges in the effective network describing the coupling of the inverters and $\ell = |\mathcal{E}|$.
We label all connections either by a consecutive number $a \in \{1,\ldots,\ell\}$ or by their endpoints.
For every edge of the network, we fix an orientation to keep track of the direction of the active power flow.
To keep track of the topology and the orientation, we define the signed node-edge incidence matrix $\matr E \in \mathbb{R}^{n \times \ell}$ with components $E_{i,a} = +1$ ($E_{i,a} = -1$) if branch $a$ starts (ends) at node $i$ and zero otherwise.
The unsigned incidence matrix $\matr F \in \mathbb{R}^{n \times \ell}$ has components $F_{i,a} = |E_{i,a}|$.
Furthermore, we have to quantify the meshing of the network.
To this end, we select $c = \ell - n + 1$ fundamental cycles of the network, i.e.~sets of oriented branches that form a closed loop.
We then introduce the cycle-edge incidence matrix $\matr C \in \mathbb{R}^{\ell \times c}$ with elements $C_{a,\gamma}=+1$ ($C_{a,\gamma}=-1$) if the (reversed) branch $a$ belongs to cycle $\gamma$ and zero otherwise.
The matrix $\matr C$ spans the space of cycle flows, i.e.~flows without sources or sinks.
This property is expressed by $\matr E \matr C = \matr 0$ and the orthogonality relation $\operatorname{im}(\matr E^\top) = \ker(\matr C^\top)$.

\subsection{Structure of the stability matrix \texorpdfstring{$\matr \Xi$}{Xi}}

According to theorem~\ref{thm:stability}, the stability of the grid is governed by the matrix $\matr \Xi$.
This matrix couples phase angles and voltage magnitudes, which complicates its analysis.
Nevertheless, it possesses a particular structure that enables the derivation of explicit stability criteria.

Using the tools introduced above, we can rewrite $\matr \Xi$ as
\begin{equation}
    \matr \Xi =\begin{pmatrix}
        \matr E \matr \QQ  \matr E^\top & \matr E \matr \PP  \matr F^\top \\
        \matr F \matr \PP \matr E^\top &  \matr R - \matr F \matr \QQ \matr F^\top
    \end{pmatrix}. \label{eq:Xi-block}
\end{equation}
The matrices $\matr \PP, \matr \QQ\in\mathbb{R}^{\ell\times\ell}$ are diagonal in edge space and summarize the real and reactive power exchanged through the grid.
In particular, the diagonal elements are given by
\begin{align}
    \QQ_{a,a} &= B_{\mathrm{from}(a),\mathrm{to}(a)} v_{\mathrm{from}(a)}^{\circ} v_{\mathrm{to}(a)}^{\circ} \cos\left(\theta_{\mathrm{from}(a)}^\circ - \theta_{\mathrm{to}(a)}^\circ \right), \\
    \PP_{a,a} &= B_{\mathrm{from}(a),\mathrm{to}(a)} v_{\mathrm{from}(a)}^{\circ} v_{\mathrm{to}(a)}^{\circ} \sin\left(\theta_{\mathrm{from}(a)}^\circ - \theta_{\mathrm{to}(a)}^\circ \right),
\end{align}
where $\mathrm{from}(a)$ and $\mathrm{to}(a)$ denote the terminal nodes of edge $a$.
Furthermore, $\matr R\in\mathbb{R}^{n\times n}$ is a diagonal matrix in the node space with the entries
\begin{equation}
    R_{ii} = \frac{1}{k_{i}^q} -2 B_{ii} v_i^\circ v_i^\circ > 0\, . \label{eq:Rii}
\end{equation}
We thus observe that $\matr \Xi$ can be written completely in terms of diagonal matrices and incidence matrices.
This allows for a reduction of the stability criteria in the spirit of the Schur complement.
In this way, we can find the following sufficient and necessary criterion for the positive definiteness of $\matr \Xi$ and thus for the small-signal stability of the network.

\begin{theorem}\label{thm:posdef-cycle}
Assume that $\cos(\theta_i^\circ - \theta_j^\circ) >0$ for all connections $(i,j)$.
Then the matrix $\matr \Xi$ is positive definite on $\mathcal{D}_{\perp}$ if and only if the matrix
\begin{align}
    \matr{\hat \Upsilon} \coloneqq & \matr R - \matr F ( \matr  \QQ + \matr  \PP \matr \QQ^{-1} \matr  \PP ) \matr F^\top \\
    &+ \matr F \matr \PP \matr \QQ^{-1} \matr C (\matr C^\top \matr \QQ^{-1} \matr C)^{-1} \matr C^\top \matr \QQ^{-1} \matr \PP \matr F^\top,
\end{align}
is positive definite.
\end{theorem}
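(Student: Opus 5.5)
The plan is to rewrite the quadratic form of $\matr\Xi$ on $\mathcal D_\perp$ in edge variables and then eliminate the phase part by a constrained Schur complement. Throughout I assume that the effective network is connected; otherwise $\matr\Xi$ has additional zero modes and $\mathcal D_\perp$ has to be refined accordingly. Using the block form~\eqref{eq:Xi-block}, write $\vec x=(\vec y,\vec z)\in\mathcal D_\perp$. Then
\begin{equation}
\vec x^\top\matr\Xi\vec x=\vec w^\top\QQ\vec w+2\vec w^\top\PP\matr F^\top\vec z+\vec z^\top(\matr R-\matr F\QQ\matr F^\top)\vec z,
\qquad \vec w\coloneqq\matr E^\top\vec y .
\end{equation}
It suffices to treat real vectors, since $\matr\Xi$ is real symmetric. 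By connectivity, $\ker\matr E^\top=\operatorname{span}(\vec 1)$, so the map $\vec y\mapsto\matr E^\top\vec y$ restricted to $\vec 1^\perp$ is injective. Its image is $\operatorname{im}(\matr E^\top)=\ker(\matr C^\top)$. Hence $\matr\Xi$ is positive definite on $\mathcal D_\perp$ if and only if the quadratic form in $(\vec w,\vec z)$ above is positive for all nonzero pairs with $\matr C^\top\vec w=\vec 0$.

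Next I use the hypothesis $\cos(\theta_i^\circ-\theta_j^\circ)>0$. Together with $B_{ij}>0$ on edges and positive voltages, it makes $\QQ$ a positive definite diagonal matrix. This has two consequences. First, the $\vec w$-block is positive definite on $\ker\matr C^\top$. Second, $\matr C^\top\QQ^{-1}\matr C$ is invertible, because $\matr C$ has full column rank $c$: its columns are fundamental cycles, each containing a distinct non-tree edge. A standard constrained Schur complement argument then applies. The form is positive definite on the constrained space if and only if, for every $\vec z\neq\vec 0$, its minimum over $\vec w\in\ker\matr C^\top$ is positive. The case $\vec z=\vec 0$ is already covered by positivity of $\QQ$.

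I then compute this minimum explicitly with a Lagrange multiplier. Set $\vec b=\PP\matr F^\top\vec z$. Stationarity gives $\QQ\vec w+\vec b=\matr C\vec\lambda$, so $\vec w=\QQ^{-1}(\matr C\vec\lambda-\vec b)$. The constraint $\matr C^\top\vec w=0$ yields $\vec\lambda=(\matr C^\top\QQ^{-1}\matr C)^{-1}\matr C^\top\QQ^{-1}\vec b$. Substituting back, the minimum of $\vec w^\top\QQ\vec w+2\vec w^\top\vec b$ equals
\begin{equation}
-\vec b^\top\QQ^{-1}\vec b+\vec b^\top\QQ^{-1}\matr C(\matr C^\top\QQ^{-1}\matr C)^{-1}\matr C^\top\QQ^{-1}\vec b .
\end{equation}
Adding $\vec z^\top(\matr R-\matr F\QQ\matr F^\top)\vec z$ gives exactly $\vec z^\top\matr{\hat\Upsilon}\vec z$. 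Because the objective is strictly convex on the affine constraint set, the stationary point is the unique global minimizer. So the form is positive on the constrained space precisely when $\matr{\hat\Upsilon}\succ0$, which is the claim.

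The main obstacle is the reduction step rather than the algebra. One has to justify carefully that $\vec y\mapsto\matr E^\top\vec y$ is a bijection from $\vec 1^\perp$ onto $\ker\matr C^\top$; this is where connectivity and the relation $\operatorname{im}\matr E^\top=\ker\matr C^\top$ are used. The Schur complement with a linear constraint must also be handled rigorously. A clean alternative is to parametrize $\ker\matr C^\top$ by a basis matrix $\matr W$, apply the ordinary Schur complement with the positive definite block $\matr W^\top\QQ\matr W$, and then verify the identity
\begin{equation}
\matr W(\matr W^\top\QQ\matr W)^{-1}\matr W^\top=\QQ^{-1}-\QQ^{-1}\matr C(\matr C^\top\QQ^{-1}\matr C)^{-1}\matr C^\top\QQ^{-1}.
\end{equation}
This identity holds because both sides are the $\QQ$-weighted projection onto $\ker\matr C^\top$, multiplied by $\QQ^{-1}$.
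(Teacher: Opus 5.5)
Your proof is correct and follows essentially the same route as the paper: you pass to edge variables $\matr E^\top\vec y\in\ker\matr C^\top$, then minimize over them under the constraint with a Lagrange multiplier, using $\QQ\succ 0$, to obtain $\vec z^\top\matr{\hat\Upsilon}\vec z$. Along the way you make explicit two points the paper leaves implicit: connectivity of the network, which is needed for $\matr E^\top$ to be injective on $\vec 1^\perp$, and the full column rank of $\matr C$, which is needed for $\matr C^\top\QQ^{-1}\matr C$ to be invertible.
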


\begin{proof}
For arbitrary $\vec x, \vec z$ we have
\begin{equation}
    \begin{pmatrix} 
        \vec x \\ \vec z
    \end{pmatrix}^{\!\!\top}
    \matr \Xi
    \begin{pmatrix} 
        \vec x \\ \vec z
    \end{pmatrix} =
    \begin{pmatrix} 
        \vec y \\ \vec z
    \end{pmatrix}^{\!\!\top}
    \begin{pmatrix}
        \matr \QQ  &  \matr \PP  \matr F^\top \\
        \matr F \matr \PP & \matr R - \matr F \matr \QQ \matr F^\top
    \end{pmatrix}
    \begin{pmatrix} 
        \vec y \\ \vec z
    \end{pmatrix},
\end{equation}
where we have defined $\vec y \coloneqq \matr E^\top \vec x \in\mathbb{R}^\ell$.
Then $\vec y \in \operatorname{im}(\matr E^\top) = \ker(\matr C^\top)$, and conversely every $\vec y \in \ker( \matr C^\top)$ can be written as $\vec y = \matr E^\top \vec x$.
Hence $\matr \Xi \succ 0$ on $\mathcal{D}_\perp$ is equivalent to
\begin{equation}
    V(\vec y,\vec z) \coloneqq
    \begin{pmatrix} 
        \vec y \\ \vec z 
    \end{pmatrix}^{\!\!\top}
    \begin{pmatrix}
        \matr \QQ  & 
        \matr \PP  \matr F^\top \\
        \matr F \matr \PP & 
        \matr R - \matr F \matr \QQ \matr F^\top
    \end{pmatrix}
    \begin{pmatrix} 
        \vec y \\ \vec z 
    \end{pmatrix} > 0,
\end{equation}
for all $(\vec y,\vec z)\neq 0$ with $\matr C^\top \vec y = \vec 0$.
We expand the quadratic form as
\begin{equation}
    V(\vec y,\vec z) = \vec y^\top \matr \QQ \vec y + 2 \vec y^\top \matr \PP \matr F^\top \vec z + \vec z^\top (\matr R - \matr F \matr \QQ \matr F^\top) \vec z,
\end{equation}
for $\matr C^\top \vec y = \vec 0$.
To check whether the quadratic form is positive definite, we fix $\vec z$ and minimize $V(\vec y, \vec z)$ with respect to $\vec y$ under the constraint $\matr C^\top \vec y = 0$.
By assumption, we have $\cos(\theta_i^\circ - \theta_j^\circ) >0$ such that $\matr \QQ \succ 0$. Hence, the problem is strictly convex and admits a unique minimizer.
Introduce the Lagrangian
\begin{equation}
    \mathcal L(\vec y,\vec \lambda) = \vec y^\top \matr \QQ \vec y + 2 \vec y^\top \matr \PP \matr F^\top \vec z + \vec \lambda^\top \matr C^\top \vec y.
\end{equation}
Stationarity yields
\begin{align}
    & 2 \matr \QQ \vec y + 2 \matr \PP \matr F^\top \vec z + \matr C \vec \lambda = \vec 0,\\
    \Rightarrow   & \vec y = - \matr \QQ^{-1} \matr \PP \matr F^\top \vec z - \frac{1}{2} \matr \QQ^{-1} \matr C \vec \lambda.
\end{align}
Imposing $\matr C^\top \vec y = \vec 0$ and solving for $\vec \lambda$ yields
\begin{equation}
    \vec \lambda = -2 (\matr C^\top \matr \QQ^{-1} \matr C)^{-1} \matr C^\top \matr \QQ^{-1} \matr \PP \matr F^\top \vec z.
\end{equation}
Substituting back yields the minimizer
\begin{equation}
    \vec y^\ast( \vec z) = - \matr \QQ^{-1} \left( \matr \eye - \matr C (\matr C^\top \matr \QQ^{-1} \matr C)^{-1} \matr C^\top \matr \QQ^{-1} \right) \matr \PP \matr F^\top \vec z.
\end{equation}
Inserting $\vec y^\ast(\vec z)$ into the quadratic form $V(\vec y,\vec z)$ yields
\begin{equation}
    \min_{\matr C^\top \!\vec y = 0} V(\vec y, \vec z) = \vec z^\top \matr{\hat \Upsilon} \vec z,
\end{equation}
with $\matr{\hat \Upsilon}$ as stated above.
Since $\matr \QQ \succ 0$, we have $\vec y^\top \matr \QQ \vec y > 0$ for all nonzero feasible $\vec y$.
Therefore
\begin{equation}
    V(\vec y,\vec z) > 0 \text{ for all feasible } (\vec y,\vec z)\neq 0
\end{equation}
if and only if
\begin{equation}
    \min_{\matr C^\top\! \vec y = 0} V(\vec y,\vec z) > 0 \quad\text{for all } \vec z \neq 0.
\end{equation}
Using the previous computation, this is equivalent to $\vec z^\top \matr{\hat \Upsilon} \vec z > 0$ for all $\vec z\neq 0$, i.e.~$\matr{\hat \Upsilon}~\succ~0$.
\end{proof}

We remark that the cycle correction in Theorem~\ref{thm:posdef-cycle} can be written in a more compact form.
Since $\matr \PP$ and $\matr \QQ$ are diagonal,
\begin{equation}
    \matr \PP \matr \QQ^{-1} = \operatorname{diag}\left(\frac{\matr \PP_{a}}{\matr \QQ_{a}}\right) = \operatorname{diag} \left(\tan(\theta_{\mathrm{from}(a)}^\circ-\theta_{\mathrm{to}(a)}^\circ) \right).
\end{equation}
We define $\matr Z = \matr F \matr \PP \matr \QQ^{-1} \matr C$ and $\matr \Lambda = \matr C^\top \matr \QQ^{-1} \matr C$.
If $\cos(\theta_i^\circ - \theta_j^\circ) >0$ for all connections $(i,j)$, then $\matr \Lambda$ is symmetric positive definite and the cycle correction can be written as
\begin{equation}
    \matr \Upsilon_{\rm cycle} = \matr Z \matr \Lambda^{-1} \matr Z^\top. \label{eq:cycle-simplified2}
\end{equation}

\subsection{Sufficient stability criteria}
The matrix $\matr{\hat \Upsilon}$ in theorem~\ref{thm:posdef-cycle} is rather complex as it includes all possible cycles of the network.
However, the cycle contribution is always positive semi-definite if $\cos(\theta_i^\circ - \theta_j^\circ) > 0$.
Hence, we can easily obtain sufficient stability criteria by omitting the term in~\eqref{eq:cycle-simplified2}.
We thus obtain the following result.

\begin{corollary}
\label{cor:posdef-Upsilon}
If (i) $\cos(\theta_i^\circ - \theta_j^\circ) >0$ holds for all connections $(i,j)$ in the network and (ii) the matrix
\begin{equation}
    \matr \Upsilon = \matr R - \matr F  \big( \matr \QQ + \matr \PP^2 \matr \QQ^{-1} \big) \matr F^\top \label{eq:cycle-simplified}
\end{equation}
is positive definite, then the matrix $\matr \Xi$ is positive definite, and the system is linearly stable.
\end{corollary}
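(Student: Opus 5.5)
The plan is to deduce the corollary from Theorem~\ref{thm:posdef-cycle} and Theorem~\ref{thm:stability} by showing that the neglected cycle term is positive semi-definite. Under assumption (i), every diagonal entry of $\matr \QQ$ is strictly positive, so $\matr \QQ \succ 0$ and Theorem~\ref{thm:posdef-cycle} applies.

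That theorem states that $\matr \Xi \succ 0$ on $\mathcal D_\perp$ if and only if
\begin{equation}
    \matr{\hat\Upsilon} = \matr \Upsilon + \matr \Upsilon_{\rm cycle}, \qquad \matr \Upsilon_{\rm cycle} = \matr Z \matr \Lambda^{-1} \matr Z^\top,
\end{equation}
is positive definite, with $\matr Z$ and $\matr \Lambda$ as in~\eqref{eq:cycle-simplified2}. The identity $\matr \PP \matr \QQ^{-1} \matr \PP = \matr \PP^2 \matr \QQ^{-1}$ holds because both matrices are diagonal, so the first part of $\matr{\hat\Upsilon}$ is exactly $\matr \Upsilon$ in~\eqref{eq:cycle-simplified}.

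The key step is to show $\matr \Lambda \succ 0$. Since $\matr \QQ^{-1} \succ 0$, we have $\vec w^\top \matr C^\top \matr \QQ^{-1} \matr C \vec w \ge 0$, with equality only if $\matr C \vec w = 0$. The columns of $\matr C$ correspond to fundamental cycles, and each such cycle contains its own non-tree edge, which appears in no other fundamental cycle. Hence $\matr C$ has full column rank, and $\matr \Lambda$ is symmetric positive definite and invertible. It follows that $\matr \Lambda^{-1} \succ 0$, and therefore $\vec z^\top \matr Z \matr \Lambda^{-1} \matr Z^\top \vec z \ge 0$ for every $\vec z$, i.e.\ $\matr \Upsilon_{\rm cycle} \succeq 0$. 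If the network has no cycles ($c=0$), the cycle term is simply absent.

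Consequently, for every $\vec z \neq 0$,
\begin{equation}
    \vec z^\top \matr{\hat\Upsilon} \vec z \ge \vec z^\top \matr \Upsilon \vec z > 0
\end{equation}
by assumption (ii). Thus $\matr{\hat\Upsilon} \succ 0$, and Theorem~\ref{thm:posdef-cycle} gives $\matr \Xi \succ 0$ on $\mathcal D_\perp$. The statement that $\matr \Xi$ is positive definite should be read on $\mathcal D_\perp$, because of the trivial zero mode.

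Theorem~\ref{thm:stability}, under Conditions~\ref{con:passivity}, then yields asymptotic stability of the linearized closed loop on $\mathcal D_\perp$. The only real obstacle is the invertibility of $\matr \Lambda$, which rests on the full column rank of $\matr C$; everything else is a direct comparison of quadratic forms.
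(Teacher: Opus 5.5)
Your proof is correct and follows the paper's argument. Under (i) you get $\matr\QQ\succ0$, so the cycle term $\matr Z\matr\Lambda^{-1}\matr Z^\top$ is positive semi-definite and $\matr{\hat\Upsilon}\succeq\matr\Upsilon\succ0$; Theorems~\ref{thm:posdef-cycle} and~\ref{thm:stability} then give the claim on $\mathcal D_\perp$. Your full-column-rank argument for $\matr C$ is a useful addition, since the paper simply asserts $\matr\Lambda\succ0$ without justifying it.
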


Before we proceed, we remark that the matrix $\matr \QQ + \matr \PP^2 \matr \QQ^{-1}$ is diagonal in edge space and has a rather peculiar form.
The diagonal entry associated with edge $(i,j)$ is given by
\begin{equation}
    \big( \matr \QQ + \matr \PP^2 \matr \QQ^{-1} \big)_{(ij),(ij)} 
    = B_{ij}  \frac{v_i^\circ v_j^\circ}{\cos(\theta_i^\circ - \theta_j^\circ)}. \label{def:matr-C} 
\end{equation}
It increases with the loading of the respective branch.

The matrix $\matr \Upsilon$ in~\eqref{eq:cycle-simplified} is defined in the space of nodes or buses.
We can derive a complementary stability condition in edge space.
This allows us to assess how line loads can affect small-signal stability.

\begin{corollary}
\label{cor:posdef-Upsilon-edge}
If the matrix
\begin{equation}
    \matr \Psi =  \big( \matr \QQ + \matr \PP^2 \matr \QQ^{-1} \big)^{-1} - \matr F^\top \matr R^{-1} \matr F
\end{equation}
is positive definite, then the matrix $\matr \Xi$ is positive definite, and the system is linearly stable.
\end{corollary}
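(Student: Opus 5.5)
The plan is to reduce Corollary~\ref{cor:posdef-Upsilon-edge} to Corollary~\ref{cor:posdef-Upsilon}. The two steps are: first show that positive definiteness of $\matr \Psi$ forces hypothesis (i) of Corollary~\ref{cor:posdef-Upsilon}, and then show that $\matr \Psi \succ 0$ is equivalent to $\matr \Upsilon \succ 0$ by a double Schur complement argument. Throughout, abbreviate the diagonal edge matrix $\matr G \coloneqq \matr \QQ + \matr \PP^2 \matr \QQ^{-1}$, whose entries are given in~\eqref{def:matr-C}. Recall also that $\matr R$ is diagonal with strictly positive entries by~\eqref{eq:Rii}, so $\matr R \succ 0$ and $\matr R^{-1}$ exists.

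\emph{Step 1: recovering condition (i).} The statement implicitly requires $\matr G$ to be invertible. If $\matr \Psi \succ 0$, then
\begin{equation}
    \matr G^{-1} \succ \matr F^\top \matr R^{-1} \matr F \succeq 0 ,
\end{equation}
since $\matr F^\top \matr R^{-1} \matr F$ is a Gram-type matrix with $\matr R^{-1} \succ 0$. Hence every diagonal entry of $\matr G^{-1}$ is positive, so $\matr G \succ 0$. By~\eqref{def:matr-C}, each such entry is $B_{ij} v_i^\circ v_j^\circ / \cos(\theta_i^\circ - \theta_j^\circ)$, where $B_{ij} > 0$ on every edge of $\mathcal E$ and $v^\circ > 0$. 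Positivity of this entry therefore forces $\cos(\theta_i^\circ - \theta_j^\circ) > 0$ on every connection, which is exactly hypothesis (i) of Corollary~\ref{cor:posdef-Upsilon}. I expect this step to be the main subtle point: the statement is phrased without a cosine assumption, and it has to be extracted from the sign of $\matr \Psi$ in this way.

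\emph{Step 2: equivalence of the node-space and edge-space criteria.} Consider the symmetric block matrix
\begin{equation}
    \matr H = \begin{pmatrix} \matr G^{-1} & \matr F^\top \\ \matr F & \matr R \end{pmatrix}.
\end{equation}
Since $\matr R \succ 0$, the Schur complement with respect to the lower block shows that $\matr H \succ 0$ if and only if $\matr G^{-1} - \matr F^\top \matr R^{-1} \matr F = \matr \Psi \succ 0$. Since $\matr G^{-1} \succ 0$ by Step 1, the Schur complement with respect to the upper block shows that $\matr H \succ 0$ if and only if $\matr R - \matr F \matr G \matr F^\top = \matr \Upsilon \succ 0$. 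Thus $\matr \Psi \succ 0$ implies $\matr \Upsilon \succ 0$. Alternatively, one can argue via the Weinstein–Aronszajn-type identity relating the spectra of $\matr R^{-1/2} \matr F \matr G \matr F^\top \matr R^{-1/2}$ and $\matr G^{1/2} \matr F^\top \matr R^{-1} \matr F \matr G^{1/2}$, which share their nonzero eigenvalues: both criteria are equivalent to the largest of these eigenvalues being smaller than one.

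\emph{Step 3: conclusion.} With (i) from Step 1 and $\matr \Upsilon \succ 0$ from Step 2, Corollary~\ref{cor:posdef-Upsilon} applies. It gives that $\matr \Xi$ is positive definite on $\mathcal D_\perp$, and Theorem~\ref{thm:stability} then yields linear stability.
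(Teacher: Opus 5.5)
Your proof is correct and takes the same route as the paper. The paper also obtains $\matr\Psi$ and $\matr\Upsilon$ as the two Schur complements of one block matrix built from $\matr R$, $\matr F$ and $(\matr\QQ+\matr\PP^2\matr\QQ^{-1})^{-1}$, reads off $\cos(\theta_i^\circ-\theta_j^\circ)>0$ from positivity of the diagonal edge weights, and then invokes Corollary~\ref{cor:posdef-Upsilon}; the only cosmetic difference is that you get $(\matr\QQ+\matr\PP^2\matr\QQ^{-1})^{-1}\succ 0$ directly from $\matr\Psi\succ 0$, whereas the paper takes it as a principal block of the positive definite block matrix.
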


\begin{proof}
Since $\matr R$ is positive definite, positive definiteness of $\matr \Psi$  implies the positive definiteness of the block matrix
\begin{equation}
    \begin{pmatrix}
        \matr R & \matr F \\
        \matr F^\top&
        \big( \matr \QQ + \matr \PP^2 \matr \QQ^{-1} \big)^{-1}
    \end{pmatrix},\label{eq:Upsilon-Psi-block}
\end{equation}
via the Schur complement.
Applying the Schur complement once more, now with respect to the lower right block, shows that both the matrix $\matr \Upsilon$ and the diagonal matrix $\matr \QQ + \matr \PP^2 \matr \QQ^{-1}$ are positive definite.
Using the components given in~\eqref{def:matr-C}, we find that $\cos(\theta_i^\circ - \theta_j^\circ) >0$ holds for all connections $(i,j)$ in the network.
Hence, stability follows from corollary~ \ref{cor:posdef-Upsilon}.
\end{proof}

We close this section by remarking that the stability criteria in corollary~\ref{cor:posdef-Upsilon} and corollary~\ref{cor:posdef-Upsilon-edge} are formulated in terms of three rather simple matrices: the matrix $\matr R$ that is diagonal in the space of nodes or buses, the matrix $\matr \QQ + \matr \PP^2 \matr \QQ^{-1}$ that is diagonal in the space of edges, and the unsigned edge incidence matrix $\matr F$ connecting nodes and edges.

The matrices $\matr \Upsilon$ and $\matr \Psi$ are essentially equivalent for the stability analysis because they are both obtained via the Schur complement of the block matrix in~\eqref{eq:Upsilon-Psi-block}.

\subsection{A graph theoretical view}

\begin{figure}[tb]
    \begin{tikzpicture}[
      >=latex,
      nodebox/.style={draw,rectangle,align=center,inner sep=4.5pt,fill=white},
      physical edge/.style={draw=black,line width=0.7pt},
      augmented edge/.style={draw=green!60!black,line width=0.8pt},
      brace/.style={decorate,decoration={brace, amplitude=5pt},line width=0.7pt}
    ]
    \node[green!60!black] at (6.5,2.9) {Cone graph};
    \node[nodebox] (i0) at (2.2,2.3) {\(i=0\)};
    \node[nodebox] (i1) at (0.8,0.7) {\(i=1\)};
    \node[nodebox] (i2) at (2.4,0.5) {\(i=2\)};
    \node[nodebox] (i3) at (1.2,-0.7) {\(i=3\)};
    \node[nodebox] (i4) at (4.0,-0.7) {\(i=4\)};
    \draw[augmented edge] (i0) -- (i1);
    \draw[augmented edge] (i0) -- (i2);
    \draw[augmented edge] (i0) -- (i3);
    \draw[augmented edge] (i0) -- (i4);
    \draw[physical edge] (i1) -- (i2);
    \draw[physical edge] (i1) -- (i3);
    \draw[physical edge] (i2) -- (i3);
    \draw[physical edge] (i3) -- (i4);
    \draw[physical edge] (i1.south east) to[bend right=0] (i4.west);
    \draw[physical edge] (i2.south east) to[bend left=0] (i4.north west);
    \draw[brace, green!60!black] (4.7,2.6) -- (4.7,1.1);
    \draw[brace] (4.7,1.) -- (4.7,-1);
    \node[align=flush left, color=green!60!black, text width=3cm] at (6.6,1.6) {Edges with weights $$W_{i0}{=}T_{ii}$$};
    \node[align=flush left, text width=3cm] at (6.6,-.1) {Physical network after Kron reduction $$W_{ij}{=}B_{ij}\frac{v_i^\circ v_j^\circ}{\cos\!\left(\theta_i^\circ - \theta_j^\circ\right)}$$};
    \end{tikzpicture}
    \vspace*{-0.2cm}\caption{
    Graph-theoretic interpretation of corollary~\ref{cor:posdef-Upsilon}.
    The matrix $\matr \Upsilon$ coincides with the grounded Laplacian of the weighted cone graph shown in the figure.
    Linear stability is guaranteed whenever this grounded Laplacian is positive definite.}
    \label{fig:Upsilon-Graph}
\end{figure}

The structure of the matrix $\matr \Upsilon$ allows for a graph theoretic interpretation of the sufficient stability criteria derived above.
We rewrite the second term in $\matr \Upsilon$ in a form that is more common to algebraic graph theory~\cite{dorflerElectricalNetworksAlgebraic2018}.
We observe that that the two expressions $- \matr F ( \matr \QQ + \matr \PP^2 \matr \QQ^{-1} ) \matr F^\top$ and $+ \matr E ( \matr \QQ + \matr \PP^2 \matr \QQ^{-1} ) \matr E^\top$ differ only in the sign of the diagonal elements.
We can absorb this difference in the diagonal part of $\matr \Upsilon$ by defining the diagonal matrix $\matr T$ with components
\begin{equation}
    T_{ii} = R_{ii} - 2\sum_{j\neq i}
    \frac{B_{ij} v_i^\circ v_j^\circ}{\cos(\theta_i^\circ - \theta_j^\circ)} = \frac{1}{k_i^q} - 2 \sum_{j=1}^n \frac{B_{ij} v_i^\circ v_j^\circ}{\cos(\theta_i^\circ - \theta_j^\circ)}.
\end{equation}
Then we can rewrite $\matr \Upsilon$ as
\begin{equation}
    \matr \Upsilon = \matr T + \matr E \big( \matr \QQ + \matr \PP^2 \matr \QQ^{-1} \big) \matr E^\top. \label{eq:Upsilon-T}
\end{equation}
By this transformation, the second term is a proper graph Laplacian, which is positive semi-definite if all edge weights are positive.

To proceed, we define a signed weighted cone graph whose vertex set consists of all original nodes $i = 1,\ldots, n$ plus one apex node $i=0$ (Fig.~\ref{fig:Upsilon-Graph}).
All pairs of original nodes $(i,j)$ with $i,j>0$ are connected by edges with the weights
\begin{equation}
    W_{ij} = \big( \matr \QQ + \matr \PP^2 \matr \QQ^{-1} \big)_{(ij),(ij)} = B_{ij}  \frac{v_i^\circ v_j^\circ}{\cos(\theta_i^\circ - \theta_j^\circ)}. \label{eq:edge-weights}
\end{equation}
Furthermore, every node $i=1,\ldots,n$ is connected to the apex node $i=0$ by an edge with weight $W_{i0} = T_{ii}$.
Then $\matr \Upsilon$ coincides with the grounded Laplacian of the cone graph.
If the Laplacian of the cone graph is positive semi-definite and has only the trivial zero mode, then the grounded Laplacian is positive definite.
Thus we can apply graph-theoretic methods to check whether $\matr \Upsilon$ is positive definite.
In particular, these methods remain applicable even when some of the edge weights $W_{i0}$ are negative~\cite{zelazo2014definiteness,piraniSmallestEigenvalueGrounded2015,chenDefinitenessGraphLaplacians2016,chenCharacterizingPositivesemidefiniteness2016,songExtensionEffectiveResistance2019}.

\section{Local stability certificates}\label{sec:local}
In this section we derive explicit local stability criteria from the general results derived in the previous section.
Local means that criteria are formulated for individual nodes (edges) of the network and may include only properties of the respective node (edge) and the incident edges (nodes).
This is highly desirable for applications as it allows one to formulate regulations for individual machines independent of other parts of the grid.

\subsection{A local stability certificate for nodes}
We start with a nodal version that follows from corollary~\ref{cor:posdef-Upsilon}.

\begin{corollary}\label{cor:3}
If $\cos(\theta_i^\circ - \theta_j^\circ) > 0$ for all connections in the network and the effective droop gains satisfy
\begin{equation}
    \frac{1}{k_i^q v_i^\circ} > 2 \sum_{j=1}^n\frac{B_{ij} v_j^\circ}{\cos(\theta_i^\circ - \theta_j^\circ)}, \label{eq:stab-local-nodal}
\end{equation}
for all nodes $i = 1,\ldots, n$, then $\matr \Xi$ is positive definite.
\end{corollary}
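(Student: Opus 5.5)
We reduce the claim to Corollary~\ref{cor:posdef-Upsilon}. Condition (i) of that corollary is assumed directly, so only $\matr \Upsilon \succ 0$ remains to be shown.

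\textbf{Rewriting the condition.} Using the representation~\eqref{eq:Upsilon-T}, we write $\matr \Upsilon = \matr T + \matr E ( \matr \QQ + \matr \PP^2 \matr \QQ^{-1} ) \matr E^\top$.
\begin{itemize}
\item Since $\cos(\theta_i^\circ-\theta_j^\circ)>0$ and $B_{ij}\ge 0$ for $i\neq j$, every edge weight $W_{ij}$ in~\eqref{eq:edge-weights} is nonnegative.
\item Hence the second term is a weighted graph Laplacian and is positive semi-definite.
\item It therefore suffices to show that the diagonal matrix $\matr T$ is positive definite, i.e.\ $T_{ii}>0$ for every $i$. Then $\matr \Upsilon \succeq \matr T \succ 0$.
\end{itemize}
In the cone-graph picture this means that all apex weights $W_{i0}$ are positive, so the grounded Laplacian is positive definite.

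\textbf{Matching $T_{ii}>0$ to~\eqref{eq:stab-local-nodal}.} Multiplying~\eqref{eq:stab-local-nodal} by $v_i^\circ>0$ gives
\begin{equation}
\frac{1}{k_i^q} > 2\sum_{j=1}^n \frac{B_{ij} v_i^\circ v_j^\circ}{\cos(\theta_i^\circ-\theta_j^\circ)},
\end{equation}
which is exactly $T_{ii}>0$ by the second expression for $T_{ii}$.

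\textbf{The diagonal term $j=i$.} The sum in $T_{ii}$ includes $j=i$, with $B_{ii}<0$ and $\cos 0 = 1$, so the contribution is $-2B_{ii}(v_i^\circ)^2$. This does not match the first expression for $T_{ii}$ naively, so it must be checked against the definitions:
\begin{itemize}
\item From~\eqref{eq:Rii}, $R_{ii}=1/k_i^q - 2B_{ii}(v_i^\circ)^2$.
\item Subtracting $2\sum_{j\ne i}(\cdot)$ from $R_{ii}$ gives $1/k_i^q - 2B_{ii}(v_i^\circ)^2 - 2\sum_{j\neq i}(\cdot)$.
\item This equals $1/k_i^q - 2\sum_{j=1}^n(\cdot)$, since the $j=i$ summand contributes $+2B_{ii}(v_i^\circ)^2$ inside $-2\sum$, i.e.\ $-2B_{ii}(v_i^\circ)^2$ overall.
\end{itemize}
So the two expressions for $T_{ii}$ agree, and~\eqref{eq:stab-local-nodal} is literally $T_{ii}>0$ with the $j=i$ term included.

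\textbf{Conclusion.} With $\matr T\succ 0$ and the Laplacian term $\succeq 0$, we obtain $\matr \Upsilon\succ 0$. Corollary~\ref{cor:posdef-Upsilon} then yields that $\matr \Xi$ is positive definite on $\mathcal D_\perp$.

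The only delicate point is this bookkeeping of the diagonal $j=i$ term and of signs. The rest is immediate.
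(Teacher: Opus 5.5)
Your proof is correct and follows essentially the same route as the paper: write $\matr\Upsilon = \matr T + \matr E(\matr\QQ+\matr\PP^2\matr\QQ^{-1})\matr E^\top$, note that the Laplacian term is positive semi-definite because $\cos(\theta_i^\circ-\theta_j^\circ)>0$, identify \eqref{eq:stab-local-nodal} with $T_{ii}>0$, and conclude via Corollary~\ref{cor:posdef-Upsilon}. Your explicit treatment of the $j=i$ summand only checks the equality of the two expressions for $T_{ii}$, which the paper states without comment.
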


\begin{proof}
We start from~\eqref{eq:Upsilon-T}.
The condition $\cos(\theta_i^\circ - \theta_j^\circ) > 0$ ensures that the second term $\matr E( \matr \QQ + \matr \PP^2 \matr \QQ^{-1})  \matr E^\top$ is positive semi-definite.
Condition \eqref{eq:stab-local-nodal} ensures that all entries $T_{ii}$ of the diagonal matrix $\matr T$ are positive.
Hence, $\matr \Upsilon$ is positive definite.
\end{proof}

An equivalent sufficient condition has previously been derived in~\cite{niehuesSmallSignal2026}.
The present approach illuminates how it deviates from necessary stability criteria and shows that it is generally not tight.
In comparison to the basic theorem \ref{thm:posdef-cycle}, we have made two approximations: (i) We have neglected the cycle components $\matr \Upsilon_{\rm cycle} = \matr{\hat \Upsilon}- \matr \Upsilon$.
Kron reduction will typically yield a full network such that the cycles term can provide a substantial additional security margin, as we will discuss for an elementary example in Sec.~\ref{sec:three-inverters}.
(ii) We have neglected the contribution from the Laplacian term $\matr E ( \matr \QQ + \matr \PP^2 \matr \QQ^{-1} ) \matr E^\top$ which may also be substantial---especially for high line loads.

\subsection{Stability certificates for edge loads}

In the following, we derive local stability criteria for edges instead of nodes.
The resulting conditions can be interpreted in terms of the edge loads, which must not be too high.

The basic idea of our approach is to decompose the matrix $\matr \Upsilon$ into a sum over all edges in the network via suitable weight factors.
For every node $i$ we define a set of weights $c_{ij}>0$ for the adjacent edges such that $\sum_{j \neq i}c_{ij} = 1$.
Notably, we do not require $c_{ij} = c_{ji}$.
Then we obtain the following corollary

\begin{corollary}
If for all edges $(i,j)$, we have $\cos(\theta_i^\circ-\theta_j^\circ) > 0$ and the condition
\begin{equation}
    \frac{\cos(\theta_i^\circ - \theta_j^\circ)}{B_{ij} v_i^\circ v_j^\circ}>  \frac{k_{i}^q/c_{ij}}{1 + 2 k_{i}^q  |B_{ii}| v_i^\circ v_i^\circ} + \frac{k_{j}^q/c_{ji}} {1 + 2 k_{j}^q |B_{jj}| v_j^\circ v_j^\circ} \label{eq:condition-edgelocal-generalized}
\end{equation}
is satisfied, then the matrix $\matr \Xi$ is positive definite, and the system is linearly stable.
\label{cor:H}
\end{corollary}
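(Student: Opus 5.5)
We will derive this from Corollary~\ref{cor:posdef-Upsilon}. Condition (i) of that corollary is assumed directly, so it suffices to show that $\matr \Upsilon = \matr R - \matr F \matr W \matr F^\top$ is positive definite. Here $\matr W \coloneqq \matr \QQ + \matr \PP^2 \matr \QQ^{-1}$ is diagonal in edge space with entries $W_{ij} = B_{ij} v_i^\circ v_j^\circ / \cos(\theta_i^\circ-\theta_j^\circ) > 0$, by~\eqref{def:matr-C}. The idea is to write $\matr \Upsilon$ as a sum of edge contributions, each of which involves only the two endpoint nodes. We then certify each contribution separately.

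\textbf{Step 1 (splitting).} Since $\sum_{j\neq i} c_{ij} = 1$ over the edges adjacent to $i$, we can write $R_{ii} = \sum_{j} c_{ij} R_{ii}$. The term $\matr F \matr W \matr F^\top$ is already a sum over edges: edge $(i,j)$ contributes $W_{ij}(\vec e_i + \vec e_j)(\vec e_i + \vec e_j)^\top$. Hence, for every $\vec z \in \mathbb{R}^n$,
\begin{equation}
    \vec z^\top \matr \Upsilon \vec z = \sum_{(i,j)\in\mathcal E}
    \begin{pmatrix} z_i \\ z_j \end{pmatrix}^{\!\top}
    \underbrace{\begin{pmatrix} c_{ij} R_{ii} - W_{ij} & -W_{ij} \\ -W_{ij} & c_{ji} R_{jj} - W_{ij} \end{pmatrix}}_{\eqqcolon \matr H_{ij}}
    \begin{pmatrix} z_i \\ z_j \end{pmatrix}.
\end{equation}

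\textbf{Step 2 (each $2{\times}2$ block is positive definite).} By~\eqref{eq:Rii} we have $R_{ii} = (1 + 2k_i^q |B_{ii}| (v_i^\circ)^2)/k_i^q$, using $B_{ii}<0$ and $k_i^q>0$. Therefore $1/(c_{ij}R_{ii}) = (k_i^q/c_{ij})/(1+2k_i^q|B_{ii}|(v_i^\circ)^2)$, and condition~\eqref{eq:condition-edgelocal-generalized} reads exactly
\begin{equation}
    \frac{1}{W_{ij}} > \frac{1}{c_{ij}R_{ii}} + \frac{1}{c_{ji}R_{jj}}.
\end{equation}
We then check Sylvester's criterion for $\matr H_{ij}$. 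The determinant is $c_{ij}R_{ii}\,c_{ji}R_{jj} - W_{ij}(c_{ij}R_{ii} + c_{ji}R_{jj})$. After dividing by the positive quantity $W_{ij}\, c_{ij}R_{ii}\, c_{ji}R_{jj}$, its positivity is equivalent to the displayed inequality. That inequality also gives $1/W_{ij} > 1/(c_{ij}R_{ii})$, so the diagonal entry $c_{ij}R_{ii} - W_{ij}$ is positive. Hence $\matr H_{ij} \succ 0$.

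\textbf{Step 3 (conclude).} Every term in the sum is nonnegative, so $\vec z^\top \matr \Upsilon \vec z \ge 0$. Equality forces $(z_i,z_j) = 0$ for every edge. Every node carries at least one adjacent edge, because its weights $c_{ij}$ sum to one. Hence equality forces $\vec z = 0$, and $\matr \Upsilon \succ 0$. Corollary~\ref{cor:posdef-Upsilon} then yields that $\matr \Xi$ is positive definite and that the system is linearly stable.

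The only delicate point is the bookkeeping in Step 1. One must make sure that the diagonal of $\matr F\matr W\matr F^\top$, namely $\sum_j W_{ij}$, is distributed correctly: each edge carries $W_{ij}$ on both endpoints in its $2{\times}2$ block. One must also make sure that the asymmetric weights $c_{ij} \neq c_{ji}$ enter the two diagonal entries as stated. The determinant manipulation in Step 2 is routine once this decomposition is written down.
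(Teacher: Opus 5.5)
Your proposal is correct and follows the paper's proof essentially step for step: the same edge-wise splitting of $\matr R$ via the weights $c_{ij}$, the same $2{\times}2$ blocks with diagonal entries $c_{ij}R_{ii}-W_{ij}$ and $c_{ji}R_{jj}-W_{ij}$ and off-diagonal entries $-W_{ij}$, Sylvester's criterion applied via $1/W_{ij} > 1/(c_{ij}R_{ii}) + 1/(c_{ji}R_{jj})$, and the final appeal to Corollary~\ref{cor:posdef-Upsilon}. Your Step~3 also makes explicit a point the paper compresses into ``after summation'': the sum is strictly positive definite, not just semidefinite, because every node lies on at least one edge.
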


\begin{proof}
Let $\vec e_i\in\mathbb{R}^n$ denote the $i$th standard basis vector and recall the edge weights $W_{ij}$ from~\eqref{eq:edge-weights} to write
\begin{align}
    \matr \Upsilon &= \matr R - \matr F \left( \matr \QQ + \matr \PP^2 \matr \QQ^{-1} \right)  \matr F^\top \nonumber\\
    &=\sum_{i=1}^n R_{ii} \vec e_i\vec e_i^\top - \sum_{i<j}(\vec e_i+\vec e_j)W_{ij}(\vec e_i+\vec e_j)^\top.
\end{align}
For any $c_{ij}>0$ with $\sum_{j\ne i}c_{ij}=1$, relabeling $i\leftrightarrow j$ when $i>j$ yields
\begin{equation}
    \sum_{i=1}^nR_{ii}\vec e_i\vec e_i^\top = \sum_{i<j}(R_{ii}c_{ij}\vec e_i\vec e_i^\top+R_{jj}c_{ji}\vec e_j\vec e_j^\top), 
\end{equation}
which, in turn, gives
\begin{equation}
    \matr \Upsilon = \sum_{i<j}
    \begin{pmatrix}
        \vec e_i & \!\!\!\vec e_j
    \end{pmatrix}
    \begin{pmatrix}
        R_{ii}c_{ij} - W_{ij} & - W_{ij} \\ - W_{ij} & R_{jj}c_{ji}- W_{ij}
    \end{pmatrix}
    \begin{pmatrix}
        \vec e_i^\top \\ \vec e_j^\top
    \end{pmatrix}.
\end{equation}
In this notation, condition \eqref{eq:condition-edgelocal-generalized} reads $\frac1{W_{ij}} > \frac1{R_{ii}c_{ij}} + \frac1{R_{jj}c_{ji}}$ and directly implies $R_{ii}c_{ij}>W_{ij}$, $R_{jj}c_{ji}>W_{ij}$ and $\det=R_{ii}c_{ij}R_{jj}c_{ji}-W_{ij}(R_{ii}c_{ij}+R_{jj}c_{ji})>0$ for the determinant of the $2{\times}2$ matrix above, which implies positive definiteness by Sylvester's criterion, and after summation over $i<j$ positive definiteness of $\matr\Upsilon$.
Positive definiteness of $\matr \Xi$ and stability then follow from corollary~\ref{cor:posdef-Upsilon}.
\end{proof}

We remark that a special case of this criterion can also be obtained by applying diagonal dominance to the matrix $\matr \Psi$ in corollary~\ref{cor:posdef-Upsilon-edge}.
This yields the weight factors $c_{ij}=1/\deg(i)$ for all edges $(i,j)$ and $c_{ij}=0$ otherwise, where $\deg(i)$ denotes the degree of node $i$, i.e.~the number of neighbors of node $i$.

In the following, we will rather use the weight factors $c_{ij}=B_{ij}/\sum_{k \neq i} B_{ik}$ which splits the nodal contribution according to the strength of the edges.
Then, condition \eqref{eq:condition-edgelocal-generalized} reads
\begin{equation}
    \frac{\cos(\theta_i^\circ - \theta_j^\circ)}{ v_i^\circ v_j^\circ} > \frac{k_{i}^q \sum_{k \neq i} B_{ik}}{1 + 2 k_{i}^q  |B_{ii}| v_i^\circ v_i^\circ} + \frac{k_{j}^q \sum_{k \neq j} B_{jk}} {1 + 2 k_{j}^q |B_{jj}| v_j^\circ v_j^\circ}\label{eq:condition-edgelocal-generalized-2}
\end{equation}
This condition can guarantee stability if the grid load is not too high.
More precisely, phase angle differences $|\theta_i^\circ - \theta_j^\circ|$ must be small such that the cosine on the left-hand side is large.
Furthermore, the condition is harder to satisfy if $v_i^\circ$ and $v_j^\circ$ differ strongly.

\section{Applications and Examples}\label{sec:examples}
In this section, we discuss several examples to demonstrate the applicability of the derived stability criteria.
Furthermore, we will showcase the effect of the simplifications used to derive the local stability certificates.
We consider models without shunts, such that the susceptance matrix is treated as a Laplacian, with $B_{ii}=-\sum_{j\neq i}B_{ij}$.

\subsection{Two inverters}

\begin{figure}
    \centering
    \includegraphics[width=\linewidth]{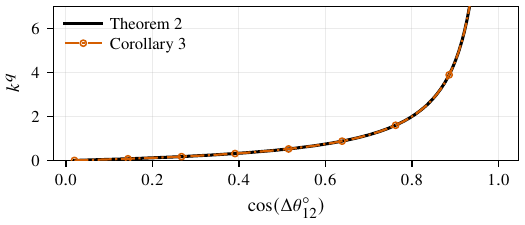}
    \vspace*{-0.6cm}\caption{
    Stability regions for two identical inverters.
    We show the upper limit for the effective reactive droop gain $k_1^q=k_2^q=k^q$ as a function of $\cos(\Delta\theta_{12}^{\circ})$.
    Linear stability is  guaranteed below the curves.
    For the current system, the exact result from theorem~\ref{thm:posdef-cycle} (black solid line) coincides with the sufficient local stability certificate from corollary~\ref{cor:3}.
    }
    \label{fig:Two-Inverters}
\end{figure}

We consider two droop-controlled inverters connected by one edge, $(1, 2)$.
Then, the unsigned incidence matrix is $\matr F = (1, 1)^\top$, and
\begin{equation}
    \Delta \theta_{12}^{\circ} = \theta_1^{\circ}-\theta_2^{\circ}, \quad W_{12} = \QQ_{12} + \PP_{12}^2 \QQ_{12}^{-1}
    = \frac{B_{12}v_1^{\circ}v_2^{\circ}}{\cos \left(\Delta \theta_{12}^{\circ}\right)} .
\end{equation}
Since the two-node network has no cycles, $ \hat{\matr \Upsilon} = \matr \Upsilon$.
Assuming $\cos \Delta \theta_{12}^\circ>0$, stability is determined by the matrix
\begin{equation}
    \matr \Upsilon = \matr R - W_{12} \matr F \matr F^\top =  \begin{pmatrix}
        R_{11} - W_{12} & - W_{12}\\
        -W_{12} & R_{22} - W_{12}
    \end{pmatrix},
\end{equation}
where $R_{ii}$ is given by~\eqref{eq:Rii}.
Using Sylvester's criterion, we find that the linearized dynamics is asymptotically stable ($\matr \Upsilon \succ 0$) if and only if $\det(\matr \Upsilon) = R_{11} R_{22} - W_{12} (R_{11}+R_{22}) > 0$.
Substituting the full expressions, this condition reads
\begin{align}
    &\left[\frac{1}{k_1^q}-2B_{11}\left(v_1^\circ\right)^2\right] \left[\frac{1}{k_2^q} - 2B_{22}\left(v_2^\circ\right)^2 \right] \\
    & \!\!\!> \frac{B_{12}v_1^\circ v_2^\circ}{\cos \left(\Delta\theta_{12}^{\circ}\right)} \left[ \frac{1}{k_1^q}-2B_{11}\left(v_1^\circ\right)^2 + \frac{1}{k_2^q}-2B_{22}\left(v_2^\circ\right)^2\right].\label{eq:2converters-stability1}
\end{align}
Let us now take $k_1^q = k_2^q = k^q$ and $v_1^\circ = v_2^\circ = v^\circ$.
Assuming that there are no shunts, we have $B_{11}=B_{22} = - B_{12}$.
Thus, we find the simplification
\begin{equation}
    k^q < \frac{1}{2 B_{12} (v^\circ)^2} \frac{\cos (\Delta \theta_{12}^\circ)}{1 - \cos (\Delta \theta_{12}^\circ)}. \label{eq:2inv-boundary}
\end{equation}
We show the border of the stable parameter region in Fig.~\ref{fig:Two-Inverters} for $B_{12} = 1$ and $v^\circ = 1$.
We find that the reactive power droop gain $k^q$ must be small for a stable operation.
The upper limit depends crucially on the line load.
It is small for high loads but diverges in the limit of no load $\cos (\Delta \theta_{12}^\circ) \rightarrow 1$.

Under the aforementioned assumption of homogeneous $k^q$ and $v^\circ$, the exact condition~\eqref{eq:2inv-boundary} coincides with the sufficient condition of corollary~\ref{cor:3}.
That is, the local stability certificate is not only sufficient, but also necessary.

\subsection{Extended test cases}\label{sec:three-inverters}

\begin{figure*}[tb]
    \centering
    \includegraphics[width=.98\linewidth]{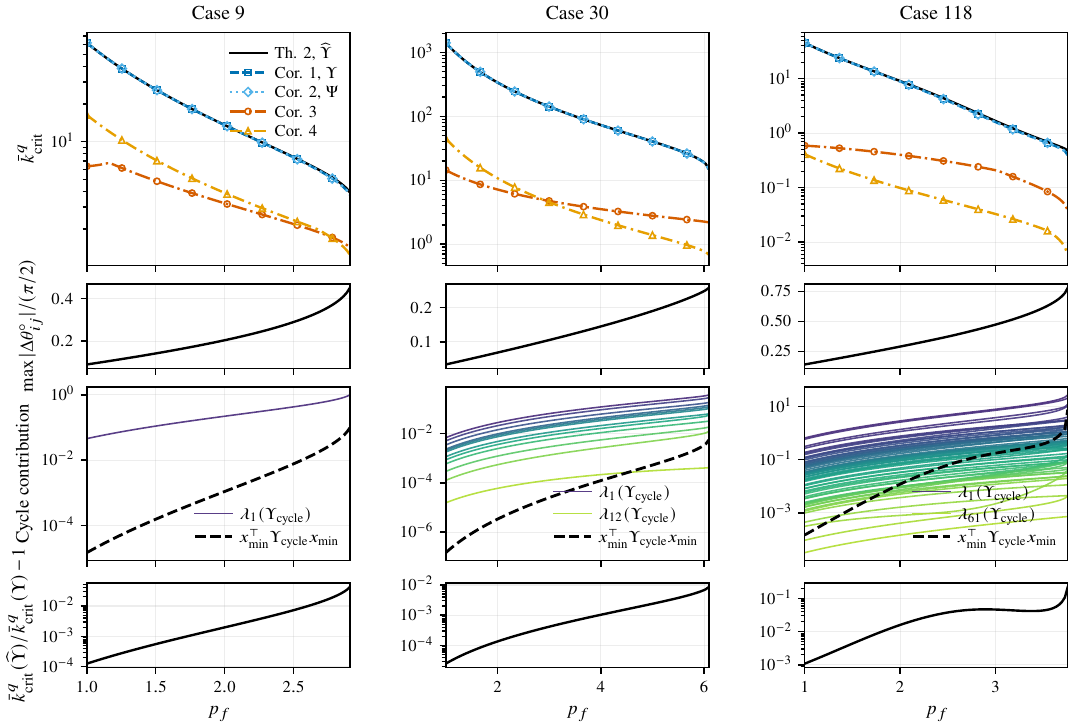}
    \vspace*{-0.6cm}\caption{
    Comparison of stability certificates and relevance of cycle contribution $\matr \Upsilon_{\rm cycle}=\hat{\matr \Upsilon} - \matr \Upsilon$ for lossless and shuntless IEEE Cases 9, 30, and 118.
    The loading factor $p_f$ is varied by scaling active power injections at all buses and reactive power setpoints at PQ buses.
    For each curve, values below the curve are certified stable by the corresponding condition.
    The top row shows the critical values of the reactive power droop gain $\bar k^q$ for each IEEE case.
    At the plotted threshold, PV/slack buses have $k_i^q=\bar{k}^q_{\rm crit}$, while PQ buses have $k_i^q=0.1\bar{k}^q_{\rm crit}$.
    The row below shows the maximum angle difference over all edges in the sparse network.
    The middle row shows the cycle contributions, i.e.~the spectrum of $\matr \Upsilon_{\rm cycle}$ and the projection $\vec x_{\min}^{\top} \matr \Upsilon_{\rm cycle} \vec x_{\min}$, where $\vec x_{\min}$ is the direction in which $\matr\Upsilon$ loses positive definiteness as the scalar droop gain $\bar k^q$ is increased.
    The lowest row shows the relative increase in the admissible value of $\bar{k}^q$ due to the cycle correction, comparing $\bar{k}^q_{\rm crit}(\hat{\matr \Upsilon})$ with $\bar{k}^q_{\rm crit}(\matr \Upsilon)$.
    Although $\matr \Upsilon_{\rm cycle}$ is not necessarily small, its projection onto the limiting mode is often much smaller than its leading eigenvalues, explaining why the cycle correction only weakly changes the certified stability boundary.
    Consequently, the critical values of the droop gain $\bar k^q$ in the exact condition $\matr{\hat \Upsilon} \succ 0$ and the sufficient condition $\matr \Upsilon \succ 0$ are very similar.}
    \label{fig:Criteria-Sparse}
\end{figure*}

In this section, we consider sparse transmission networks given by the IEEE test cases 9, 30 and 118.
Starting from the original test case operating point, we scale the active power injections $p_i^\circ$ at all buses, as well as the reactive power setpoints $q_i^\circ$ of the PQ buses, by a loading factor $p_f$. 
Voltage magnitudes at the PV buses and at the slack bus are kept fixed at their base values.
The remaining state variables, namely the non-slack voltage angles and the PQ-bus voltage magnitudes, are then computed from the lossless load-flow equations.
We increase $p_f$ until the lossless load-flow equations no longer converge.

For the small-signal stability analysis, we assume that generation at all PV buses and at the slack bus is provided by grid-forming inverters with homogeneous voltage-droop gain $k_i^q=\bar{k}^q$.
At the remaining PQ buses, we assume smaller embedded inverter participation with weaker but non-negligible voltage droop, $k_i^q=0.1 \bar{k}^q$.

All certificates in Fig.~\ref{fig:Criteria-Sparse} (top rows) are evaluated for the same profile $k_i^q=\bar{k}^q h_i$, so that each curve gives an upper bound on the scalar droop level $\bar{k}^q$.
The exact boundary is obtained from $\hat{\matr\Upsilon}~\succ~0$ in theorem~\ref{thm:posdef-cycle}.
Corollaries~\ref{cor:posdef-Upsilon} and~\ref{cor:posdef-Upsilon-edge} are equivalent.
They neglect the cycle contribution $\matr \Upsilon_\mathrm{cycle}$ in~\eqref{eq:cycle-simplified2} and yield the sufficient condition $\matr \Upsilon~\succ~0$.
In the current example, there is no discernible difference between the curves obtained from $\hat{\matr \Upsilon}\succ 0$ and $\matr \Upsilon \succ 0$.
We explain why the cycle contribution is negligible in the next section.

The local stability certificates derived in Sec.~\ref{sec:local} are more conservative.
We contrast corollary~\ref{cor:3} and corollary~\ref{cor:H} using the same droop profile $k_i^q=\bar{k}^q h_i$ as above, so that each certificate yields an upper bound on the scalar droop level $\bar{k}^q$.
For corollary~\ref{cor:H}, we choose weights proportional to the line susceptances, $c_{ij}=B_{ij}/\sum_{k\neq i}B_{ik}$.
As expected, the resulting bounds are below the global certificates.
The gap is especially pronounced in the larger test cases, where the local certificates discard more of the network structure.
The ordering of the two local certificates is not universal (cf.~corollary~\ref{cor:3} and corollary~\ref{cor:H} curves for Case~9 and Case~30).
Corollary~\ref{cor:3} is limited by the worst node, while corollary~\ref{cor:H} is limited by the worst edge for the chosen weights.
As the loading $p_f$ increases, the active limiting node or edge can change, producing kinks in the otherwise smooth curve (cf.~the corollary~\ref{cor:3} curves for Cases~9 and~118).

\subsection{The cycle contribution}

We now explain why the cycle correction only weakly affects the certified stability boundary, even though $\matr \Upsilon_{\rm cycle}$ is not necessarily small.
Stability is determined by the lowest non-trivial eigenvalue of $\matr{\hat \Upsilon}$, which we denote as $\hat \mu_\mathrm{min}$ and assume to be simple.
To quantify the influence of the cycle correction, we regard $\matr\Upsilon_{\rm cycle}$ as a perturbation of $\matr{\hat \Upsilon}$.
Let $\mu_\mathrm{min}$ and $\vec x_\mathrm{min}$ denote the lowest eigenvalue of $\matr \Upsilon$ and its associated eigenvector.
First-order Rayleigh--Schrödinger perturbation theory yields~\cite{griffiths2018introduction}
\begin{equation}
    \hat \mu_\mathrm{min} = \mu_\mathrm{min} + \vec x_{\min}^{\top} \matr\Upsilon_{\rm cycle} \vec x_{\min} + \mathcal{O}\left( \matr\Upsilon_{\rm cycle}^2\right).
\end{equation}
Figure~\ref{fig:Criteria-Sparse} (bottom rows) compares the first-order correction $\vec x_{\min}^{\top} \matr\Upsilon_{\rm cycle} \vec x_{\min}$ with the spectrum of $\matr\Upsilon_{\rm cycle}$.
For weak to modest grid loads, the correction is orders of magnitude smaller than the spectral scale of $\matr\Upsilon_{\rm cycle}$.
Hence, the small correction is not a consequence of $\matr\Upsilon_{\rm cycle}$ itself being small.
Instead, the stability-limiting mode $\vec x_{\min}$ is concentrated in the kernel and the eigenspaces corresponding to small eigenvalues of $\matr\Upsilon_{\rm cycle}$.
One possible explanation is that the stability-limiting mode is localized on nodes outside the meshed part of the network, such as dead ends or bridges.

\section{Conclusion}
This paper establishes a necessary and sufficient condition for the small-signal stability of lossless inverter-based power grids without assuming a restrictive device model.
By reformulating the stability problem in a graph-theoretic framework, we show how network topology, operating point, and inverter dynamics jointly determine stability through a single matrix condition.
We identify increasingly decentralized stability certificates, clarifying the trade-off between exactness and locality while identifying the stabilizing network interactions neglected by local criteria.
Studies on IEEE benchmark systems demonstrate that the resulting certificates capture the stability boundary and quantify the conservatism introduced by decentralization.
Beyond providing new analytical tools for stability assessment, the proposed framework provides a graph-theoretic interpretation of inverter-grid stability that may support the design of decentralized control strategies and future grid codes.

\section*{Acknowledgements}

JN gratefully acknowledges funding by Studienstiftung des Deutschen Volkes and by the Deutsche Forschungsgemeinschaft (DFG, German Research Foundation) under Germany's Excellence Strategy -- The Berlin Mathematics Research Center MATH+ (EXC-2046/1, EXC-2046/2, project No. 390685689), and by the OpPoDyn Project, Federal Ministry for Economic Affairs and Climate Action under Grant No. 03EI1071A.
CD acknowledges funding through the Walter Benjamin program by the Deutsche Forschungsgemeinschaft (DFG, German Research Foundation, project No. 570237743).
LRG acknowledges funding through the NNLL project by NordForsk (NordForsk, project No. 242039) and support through the FME SOLAR project (Norwegian Research Council, project No. 350244).
FH acknowledges funding by the German Federal Ministry for Economy and Energy (eKI4DS, project No. 03EI1092A).
Generative AI was used for language editing.

%\bibliographystyle{ieeetr}
%\bibliography{references.bib}

\end{document}